\tikzstyle{every state}=[minimum size=12pt,inner sep=0pt]
\newcommand{\N}{{\mathbb N}}
\newcommand{\dz}{\mathfrak d}
\newcommand{\aut}[1]{{\mathcal #1}}
\newcommand{\auta}{\mathcal{A}}
\newcommand{\dual}[1]{{\mathfrak d}({#1})}
\newcommand{\mot}[1]{{\mathbf {#1}}}
\newcommand{\pres}[1]{\langle{#1}\rangle}
\newcommand{\presm}[1]{\pres{{#1}}_{+}}
\newcommand{\otree}[1][]{\mathfrak{t}{\ifthenelse{\equal{#1}{}}{}{(\aut{#1})}}}
\newcommand{\ota}{\otree{(\auta)}}
\newcommand{\gauta}{\pres{\auta}}
\newcommand{\ie}{\emph{i.e.}\xspace}
\newcommand{\Sufeq}{\mathcal{S}_{\text{eq}}}
\newcommand{\Preeq}{\mathcal{P}_{\text{eq}}}
\newcommand{\jungle}{{\mathfrak j}}
\newcommand{\Li}[1]{\mot{#1 }L_{\mot{#1}}}	
\newcommand{\equivclass}{\gamma}
\newcommand{\lacroix}{\tikz[baseline=-.5ex]{\draw[->,>=latex] (0,0) -- (4ex,0); \draw[->,>=latex] (1.8ex,2ex) -- (1.8ex,-2ex);}}
\theoremstyle{plain}
\newtheorem{proposition}[theorem]{Proposition}
\newtheorem{myremark}[theorem]{Remark}
\newtheorem*{theorem*}{Theorem}
\title{Connected reversible Mealy automata of prime size cannot generate infinite Burnside groups
\footnote{This work was partially supported by the French \emph{Agence Nationale pour la~Recherche},
through the Project {\bf MealyM} ANR-JS02-012-01.
}}
\titlerunning{Connected bir. automata of prime size cannot generate infinite Burnside groups}
\author{Thibault Godin}
\author{Ines Klimann}
\affil{Univ. Paris Diderot, Sorbonne Paris Cit\'e, IRIF, UMR 8243 CNRS,\\ F-75013 Paris, France
\texttt{\{godin,klimann\}@liafa.univ-paris-diderot.fr}}%
\authorrunning{Th. Godin and I. Klimann}
\subjclass{F.4.3 Formal Languages, F.1.1 Models of Computation, G.2.m Miscellaneous}
\keywords{Burnside problem, Automaton groups, Reversibility, Orbit trees}
\begin{document}

\maketitle

\begin{abstract}
The simplest example of an infinite Burnside group arises in the class of automaton groups. 
However there is no known example of such a group generated by a
reversible Mealy automaton. It has been proved that, for a connected automaton of size at most~3, or when the automaton is not bireversible, the generated group cannot be Burnside infinite. In this paper, we extend these results to  automata with bigger stateset, proving that, if a connected reversible automaton  has a prime number of states, it cannot generate an infinite Burnside group.
\end{abstract}

\section{Mealy automata and the General Burnside problem}
The Burnside problem is a  famous, long-standing question in group theory. In 1902, Burnside  asked if a \emph{finitely generated} group whose all elements have finite order --henceforth called a \emph{Burnside group}-- is necessarily finite.

The question stayed open until Golod and Shafarevitch exhibit in 1964 an infinite group satisfying Burnside's conditions~\cite{golod, golod_shafarevich}, hence solving the general Burnside problem. In the early 60's, Glushkov suggested using automata to attack the Burnside problem~\cite{glushkov:automata}. Later, Aleshin~\cite{aleshin} in 1972 and then Grigorchuk~\cite{grigorchuk1} in 1980 gave simple examples of automata generating infinite Burnside groups. 
Over the years, automaton groups have been successfully used  to solve
several other group theoretical problems and conjectures such as
Atiyah, Day, Gromov or Milnor problems; the underlying automaton
structure can indeed be used to better understand the generated group.

It is remarkable that every known examples of infinite Burnside automaton groups   are generated by non reversible  Mealy automata, that is,  Mealy automata where the input letters do not all act like  permutations on the stateset. We conjecture that it is in fact
impossible for a reversible Mealy automaton to generate an infinite
Burnside group.
 Our past work with several co-authors has already given some partial
 results  to this end.
In \cite{GKP} it is proven that non bireversible Mealy automata
cannot generate  Burnside groups. For the whole class of reversible
automata, it has been proved in \cite{Kli13} that 2-state reversible
Mealy automata cannot generate  infinite Burnside groups. This result
has  later been extended in \cite{klimann_ps:3state} to 3-state
connected reversible automata. In this paper we generalize these
results to any connected revertible automaton with a prime number of states:

\begin{theorem*}
 A connected  invertible-reversible Mealy automaton of  prime size cannot generate an infinite Burnside group.
\end{theorem*}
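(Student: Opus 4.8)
The plan is to argue by contradiction: suppose the connected invertible-reversible automaton $\auta$, with $|Q|=p$ states for a prime $p$, generates an infinite Burnside group $\gauta$. First I would dispose of the non-bireversible case by invoking the result of~\cite{GKP}, according to which a reversible automaton that is not bireversible cannot generate an infinite Burnside group; hence I may assume from now on that $\auta$ is bireversible, so that the dual $\dual{\auta}$ is again invertible and reversible. I then pass to the dual action: $\dual{\auta}$ is invertible, the group it generates acts on $Q^{*}$, and I record the orbits of this action level by level in the orbit tree $\ota$. Connectedness of $\auta$ means exactly that this action is transitive on $Q=Q^{1}$, so $\ota$ has a single vertex at the first level, and by the finiteness criterion for reversible automata expressed through the orbit tree, the group $\gauta$ is infinite precisely when $\ota$ is infinite.

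The prime hypothesis enters through the first-level action. The image in $\mathrm{Sym}(Q)$ of the group generated by $\dual{\auta}$ is a transitive permutation group of prime degree $p$; by Burnside's classical dichotomy for transitive groups of prime degree, it is either $2$-transitive or a solvable (Frobenius) subgroup of the affine group $\mathrm{AGL}(1,p)$. In either case Cauchy's theorem provides a $p$-cycle, realised by a word $\mathbf{x}\in\Sigma^{*}$ whose action cyclically permutes the $p$ letters of $Q$ at the first level; I would use this cycle as the seed for building a periodic structure deeper in $\ota$. The aim of the case analysis is to show that this first-level structure propagates: either the orbits in $\ota$ stop growing after finitely many levels --- in which case the finiteness criterion forces $\gauta$ finite, contradicting infinitude --- or they grow along an infinite branch in a controlled, eventually self-similar way, from which I extract a concrete word over $Q$, that is, an element of $\gauta$, of infinite order, contradicting the Burnside condition.

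Carrying out the propagation is the heart of the argument. The central step is to understand how a single orbit at level $n$ refines into orbits at level $n+1$: each of its words extends in $p=|Q|$ ways, and reversibility forces these extensions to distribute into subordinate orbits whose sizes are governed by index computations in point stabilisers. The primality of $p$ rigidifies these computations, since the relevant indices must divide quantities tied to $p$, and this is what makes the two alternatives above exhaustive and rules out an intermediate regime where orbits grow just slowly enough to keep all elements of finite order. The main obstacle I anticipate is precisely the $2$-transitive case: there the level-$1$ permutation group carries little arithmetic information beyond transitivity, so I expect to need the bireversibility symmetry --- applying the same prime-degree analysis to the reciprocal roles of $\auta$ and $\dual{\auta}$ --- together with a pumping argument on the orbit sizes along an infinite branch to locate a repetition that certifies an element of infinite order.
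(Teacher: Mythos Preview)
Your initial reduction is correct and matches the paper: dispose of the non-bireversible case via~\cite{GKP}, then work with a connected bireversible $\auta$ and its orbit tree $\ota$. After that, however, your argument has a genuine gap. The Burnside dichotomy for transitive groups of prime degree only constrains the image of $\pres{\dual{\auta}}$ in $\mathrm{Sym}(Q)$, i.e.\ the level-$1$ action; nothing in your outline explains how the presence of a $p$-cycle at level~$1$ propagates to control orbits at arbitrary depth. Your dichotomy ``either orbits stop growing, or they grow self-similarly along a branch'' is precisely what needs to be \emph{proved}, and you acknowledge this yourself when you call the $2$-transitive case an ``obstacle'' and say you ``expect to need'' a further pumping argument. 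As stated, the propagation step is a hope, not an argument, and there is no evident way to extract an element of infinite order from a first-level $p$-cycle alone.

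The paper takes a completely different route and never invokes the prime-degree classification. It first splits the bireversible case according to whether $\ota$ contains an \emph{active self-liftable branch}: if it does, Theorem~\ref{thm-char} already yields an element of infinite order. The real content is the remaining case, where the paper shows the group is actually \emph{finite}. For this it builds the \emph{jungle tree} $\jungle$ (a $1$-self-liftable subtree whose lower part is regular with all labels~$1$), singles out the \emph{stems} at the bottom of its trunk, and introduces an equivalence $\sim$ on stems via ``reachable through an identity-acting prefix''. A counting argument on the trunk labels gives $\Sufeq(1)\mid k_1=p$ with $\Sufeq(1)\geq 2$, so $\Sufeq(1)=p$; this is the \emph{only} place primality is used, and it yields that every state of $Q$ occurs as the first letter of a stem in every $\sim$-class (Theorem~\ref{thm:Qprefix}). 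From this one deduces that any $\mot{u}\in Q^*$ acts like some cyclic $\jungle$-word, hence has order bounded by a constant depending only on $\jungle$; Zelmanov's solution of the restricted Burnside problem then finishes (residually finite plus bounded exponent implies finite). None of this structure --- self-liftable branches, jungle trees, stems, the $\sim$-equivalence, the appeal to Zelmanov --- appears in your proposal, and it is this machinery, not the level-$1$ permutation group, that carries the proof.
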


Our proof is inspired by  the former work in the 3-state case of the second author with Picantin and Savchuk~\cite{klimann_ps:3state}. However the extension from 3 states to $p$ states, for any prime $p$, required the introduction of a new machinery. This constitutes the main part of our paper, see Section~\ref{sec-jungle}.\medskip

 The paper is organized as follows. In Section~\ref{sec-basic} we set
 up notations and recall useful facts on Mealy automata, automaton
 groups, and rooted trees. Then in Section~\ref{sec-cc} we link some
 characteristics  of an automaton group to the   connected components
 of the powers of the generating automaton. In Section~\ref{sec-otree}
 we introduce a tool developed in \cite{klimann_ps:3state}, the labeled orbit tree,
 that is used in Section~\ref{sec-jungle} to define our main tool, the
 \emph{jungle tree}. In this former section we also present some
 constructions and properties connected to this jungle tree. At last,
 in Section~\ref{sec-thm}, we gather our information and prove our
 main result. 
\section{Basic notions}\label{sec-basic}

\subsection{Groups generated by Mealy automata}
We first recall the formal definition of an automaton. A {\em (finite, deterministic, and complete) automaton} is a
triple
\(
\bigl( Q,\Sigma,\delta = (\delta_i\colon Q\rightarrow Q )_{i\in \Sigma} \bigr)
\),
where the \emph{stateset}~$Q$
and the \emph{alphabet}~$\Sigma$ are non-empty finite sets, and
the~\(\delta_i\) are functions.

A \emph{Mealy automaton} is a quadruple

\(( Q, \Sigma, \delta
,\rho
)\), 

such that \((Q,\Sigma,\delta)\) and~\((\Sigma,Q,\rho)\) are both
automata.
In other terms, a Mealy automaton is a complete, deterministic,
letter-to-letter transducer with the same input and output
alphabet. Its \emph{size} is the cardinal of its stateset.

The graphical representation of a Mealy automaton is
standard, see Figure~\ref{fig-bella}~left.

\newcommand{\trivaut}{\begin{tikzpicture}
\node[state] (a) {};
\path (a) edge[loop right] (a);
  \end{tikzpicture}
}
\newcommand{\bt}{\scalebox{.6}{\begin{tikzpicture}
\node[state] (a) {};
\node[state,below right of=a] (b) {};
\node[state,above right of=b] (c) {};
\path (a) edge[bend left] (c)
         (b) edge (a)
         (b) edge[loop right] (b)
         (c) edge[bend left] (a)
         (c) edge (b);
  \end{tikzpicture}}
}
\newcommand{\btcarrea}{\bt}
\newcommand{\btcarreb}[1][black]{\scalebox{.6}{\begin{tikzpicture}
\node[state] (ab) {};
\node[state,below right of=ab] (bc) {};
\node[state,above right of=bc] (cb) {};
\path (ab) edge[bend left] (cb)
         (bc) edge (ab)
         (cb) edge[bend left] (ab);
\node[state,right of=cb] (ac) {};
\node[state,below right of=ac] (ba) {};
\node[state,above right of=ba] (ca) {};
\path (ac) edge[bend left] (ca)
         (ba) edge (ac)
         (ca) edge[bend left] (ac);
\begin{scope}[#1]
\path (ab) edge[bend left=40] (ca)
         (cb) edge (ba)
         (bc) edge (ba)
         (ac) edge (cb)
         (ca) edge[bend left=60] (bc)
         (ba) edge[bend left=20] (bc);
\end{scope}
  \end{tikzpicture}}
}
\newcommand{\btcubea}{\btcarreb}
\newcommand{\btcubeb}[1][black]{\scalebox{.6}{\begin{tikzpicture}
\node[state] (aba) {};
\node[state,below right of=aba] (bca) {};
\node[state,above right of=bca] (cbc) {};
\path (aba) edge[bend left] (cbc)
         (cbc) edge[bend left] (aba);
\node[state,right of=cbc] (acb) {};
\node[state,below right of=acb] (bab) {};
\node[state,above right of=bab] (cac) {};
\path (bab) edge (acb)
         (cac) edge[bend left] (acb);
\begin{scope}
\path (aba) edge[bend left=40] (cac)
         (cbc) edge (bab)
         (cac) edge[bend left=60] (bca)
         (bab) edge[bend left=20] (bca);
\end{scope}
\begin{scope}[below=60]
\node[state] (abc) {};
\node[state,below right of=abc] (bcb) {};
\node[state,above right of=bcb] (cba) {};
\path (abc) edge[bend left] (cba)
         (cba) edge[bend left] (abc);
\node[state,right of=cba] (aca) {};
\node[state,below right of=aca] (bac) {};
\node[state,above right of=bac] (cab) {};
\path (bac) edge (aca)
         (cab) edge[bend left] (aca);
\path (abc) edge[bend left=40] (cab)
         (cba) edge (bac)
         (cab) edge[bend left=60] (bcb)
         (bac) edge[bend left=20] (bcb);
\begin{scope}[#1]
\path (acb) edge (cba)
         (acb) edge (cab)
         (bca) edge (abc)
         (bca) edge (bac)
         (aca) edge (cbc)
         (aca) edge (cac)
         (bcb) edge (aba)
         (bcb) edge (bab);
\end{scope}
\end{scope}
  \end{tikzpicture}}
}

\begin{figure}[h]
\centering

\begin{tikzpicture}
\begin{scope}[->,>=latex,every state/.style={minimum
      size=1pt,inner sep=1pt},level 1/.style={level distance=10mm},level 2/.style={sibling
      distance=4.5cm,level distance=.35cm},level 3/.style={sibling distance=1.8cm,level distance=2.8cm}],inner sep=0pt]
\node {\trivaut}
  child  {
  node {\bt}
     child { node {\btcarrea}
               child { node {\btcarrea}
                          edge from parent node[above left=2] {\(1\)}}
               child { node {\btcarreb[lightgray]}
                          edge from parent node[above right=2] {\(2\)}}
               edge from parent node[above left=2,near start] {\(1\)}
            }
     child { node {\btcarreb[lightgray]}
              child { node {\btcubea}
                         edge from parent node[above left=2] {\(1\)}}
              child[sibling distance=2.8cm] { node {\btcubeb[lightgray]}
                         edge from parent node[above right=2,very near end] {\(2\)}}
              edge from parent node[above right=2,very near start] {\(2\)}
            }
     edge from parent node[right=2] {\(3\)}
  }
;
\end{scope}
\begin{scope}[every state/.style={minimum
      size=3pt,inner sep=1pt},below=60,left=200,->,>=latex,node distance=1.5cm]
\node[state] (a) {};
\node[state,below right of=a] (b) {};
\node[state,above right of=b] (c) {};
\path (a) edge[bend left] node[above=-1] {\(0|1,1|0\)} (c)
         (b) edge node[left=2] {\(0|0\)} (a)
         (b) edge[loop right] node[below right=-3] {\(1|1\)}(b)
         (c) edge[bend left] node[above] {\(1|1\)} (a)
         (c) edge node[right=2] {\(0|0\)} (b);
\end{scope}
\end{tikzpicture}

\caption{The Bellaterra automaton~\(\aut{B}\)
and four levels of the orbit tree~\(\otree[B]\).}\label{fig-bella}

\end{figure}

A Mealy automaton \((Q,\Sigma,\delta, \rho)\) is
\emph{invertible\/} if the functions \(\rho_x\) are permutations of~\(\Sigma\)
and \emph{reversible\/} if the functions \(\delta_i\) are
permutations of~\(Q\).

In a Mealy automaton~\(\aut{A}=(Q,\Sigma, \delta, \rho)\), the sets~\(Q\)
and~\(\Sigma\) play dual roles. So we may consider the \emph{dual (Mealy)
automaton} defined by
\(
\dual{\aut{A}} = (\Sigma,Q, \rho, \delta)
\).
Obviously, a Mealy automaton is reversible if and only if its dual is
invertible.

An invertible Mealy automaton is \emph{bireversible} if it is
reversible (i.e. the input letters of the transitions act like
permutations on the stateset) and the output letters of the
transitions act like permutations on the stateset.

Let~\(\aut{A} = (Q,\Sigma, \delta,\rho)\) be a Mealy automaton.
Each state \(x\in Q\) defines a mapping from \(\Sigma^*\) into itself
recursively defined by:
\begin{equation*}
\forall i \in \Sigma, \ \forall \mot{s} \in \Sigma^*, \qquad
\rho_x(i\mot{s}) = \rho_x(i)\rho_{\delta_i(x)}(\mot{s}) \:.
\end{equation*}

The image of the empty word is itself.
The mapping~\(\rho_x\) for each $x\in Q$ is length-preserving and prefix-preserving.

We say that~\(\rho_x\) is the
function \emph{induced} by \(x\).
For~$\mot{x}=x_1\cdots x_n \in Q^n$ with~$n>0$, set
\(\rho_\mot{x}\colon\Sigma^* \rightarrow \Sigma^*, \rho_\mot{x} = \rho_{x_n}
\circ \cdots \circ \rho_{x_1} \:\).

Denote dually by~\(\delta_i\colon Q^*\rightarrow Q^*,
i\in \Sigma\), the functions induced by the states of~$\dz(\aut{A})$. 
For~$\mot{s}=s_1\cdots s_n
\in \Sigma^n$ with~$n>0$, set~\(\delta_\mot{s}\colon Q^* \rightarrow Q^*,
\ \delta_\mot{s} = \delta_{s_n}\circ \cdots \circ \delta_{s_1}\).

The semigroup of mappings from~$\Sigma^*$ to~$\Sigma^*$ generated by
$\{\rho_x, x\in Q\}$ is called the \emph{semigroup generated
  by~$\aut{A}$} and is denoted by~$\presm{\aut{A}}$.
When~\(\aut{A}\) is invertible, the functions induced by its states are
permutations on words of the same length and thus we may consider
the group of mappings from~$\Sigma^*$ to~$\Sigma^*$ generated by
$\{\rho_x, x\in Q\}$. This group is called the \emph{group generated
  by~$\aut{A}$} and is denoted by~$\pres{\aut{A}}$.

\subsection{Terminology on trees}\label{sec-trees}
Throughout this paper, we will use different sorts of labeled
trees. Here we set up some common terminology for all of them.

All our trees are rooted, \emph{i.e.} with a selected vertex called the \emph{root}. We will visualize the trees traditionally as growing down from the root. 
A \emph{path} is a (possibly infinite) sequence of adjacent edges without backtracking
from top to bottom. A path is said to be~\emph{initial} if it starts at the root of the tree.
A \emph{branch} is an infinite initial path.
The lead-off vertex of a non-empty path \(\mot{e}\) is denoted by~\(\top(\mot{e})\)
and its terminal vertex by~\(\bot(\mot{e})\) whenever the path is finite.

The \emph{level of a vertex} is its distance to the root and the
\emph{level of an edge} or \emph{a path} is the level of its initial vertex.

If the edges of a rooted tree are labeled
by elements of some finite set, the \emph{label} of a (possibly infinite)
path is the ordered sequence of labels of its edges.

Extending the notions of children, parents and descendent to the
edges, we will say that an edge \(f\) is the \emph{child} of an edge
\(e\) if \(\bot(e)=\top(f)\) (\emph{parent} being the converse notion,
and \emph{descendent} the transitive closure).

All along this article we will follow walks on some trees. A
\emph{walk} is just a path in a tree, which is build gradually. In
particular if \(\mot{e}\) is a finite path (or can identify one), to say
that it can be followed by \(f\) in some tree means that
\(\mot{e}f\) is (or identifies) also a
path in that tree.


\section{Connected components of the powers of an automaton}\label{sec-cc}
In this section we detail the basic properties of the connected
components of the powers of a reversible Mealy automaton, as it has
been done in~\cite{klimann_ps:3state}. The link
between these components is central in our construction.

Let \(\aut{A}=(Q,\Sigma,\delta,\rho)\) be a reversible Mealy automaton.

By reversibility,
all the connected components of its underlying graph are strongly connected.

Consider the powers of~\(\aut{A}\):
for~\(n>0\), its \emph{\(n\)-th power}~$\aut{A}^n$ is the Mealy automaton
\begin{equation*}
\aut{A}^n = \bigl( \ Q^n,\Sigma, (\delta_i\colon Q^n \rightarrow
Q^n)_{i\in \Sigma}, (\rho_{\mot{x}}\colon \Sigma \rightarrow \Sigma
)_{\mot{x}\in Q^n} \ \bigr)\enspace.
\end{equation*}
By convention, \(\aut{A}^0\) is the trivial automaton on the alphabet \(\Sigma\).

As \(\aut{A}\) is reversible, so are its powers
and the
connected components of~\(\aut{A}^n\) coincide with the orbits of the
action of~\(\pres{\dual{\aut{A}}}\) on~\(Q^n\).

Since~\(\aut{A}\) is reversible, there is a very particular connection
between the connected components of~\(\aut{A}^n\)
and those 
of~\(\aut{A}^{n+1}\) as highlighted
in~\cite{Kli13}. More precisely, 
take a
connected component~\(\aut{C}\) of some~\(\aut{A}^n\), 
and let~\(\mot{u}\in
Q^n\) (also written $|\mot{u}|=n$)  be a state of~\(\aut{C}\). 
Take also \(x\in Q\) a state of
$\aut{A}$, and let $\aut{D}$ be the connected component of $\aut{A}^{n+1}$
containing the state $\mot{u}x$.  Then, for any
state~\(\mot{v}\) of~\(\aut{C}\), there exists a state of~\(\aut{D}\)
prefixed with~\(\mot{v}\):
\[\exists\mot{s}\in\Sigma^*\mid \delta_{\mot{s}}(\mot{u}) =
\mot{v}\quad \text{and so}\quad  \delta_{\mot{s}}(\mot{u}x) =
\mot{v}\delta_{\rho_{\mot{u}}(\mot{s})}(x)\enspace.\]

Furthermore, if \(\mot{u}y\) is a state of~\(\aut{D}\), for some
state~\(y\in Q\) different from~\(x\), then \(\delta_{\mot{s}}(\mot{u}x)\)
and~\(\delta_{\mot{s}}(\mot{u}y)\) are two different states of~\(\aut{D}\)
prefixed with~\(\mot{v}\),
because of the reversibility of~\(\aut{A}^{n+1}\): the transition
function~\(\delta_{\rho_{\mot{u}}(\mot{s})}\) is a permutation.
Hence \(\aut{D}\) can be seen as consisting of several
copies of~\(\aut{C}\) and~\(\#\aut{C}\) divides~\(\#\aut{D}\). They
have the same size if and only if, for each state~\(\mot{u}\)
of~\(\aut{C}\) and any different states~\(x,y\in Q\), \(\mot{u}x\) and
\(\mot{u}y\) cannot simultaneously lie in \(\aut{D}\).

\smallskip

The connected components of the powers of a Mealy automaton and the
 finiteness of the generated group or 
 of a monogenic subgroup are closely related,
as shown in the following propositions (obtained independently in~\cite{klimann_ps:3state,dangeli_r:geometric}).

\begin{proposition}\label{prop-bounded-cc}

A reversible Mealy automaton generates a finite group if
and only if the connected components of its powers have bounded size.
\end{proposition}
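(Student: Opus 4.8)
The plan is to recast group finiteness as finiteness of the transformation semigroup, to identify the induced maps appearing inside a connected component with the \emph{sections} of a single group element, and then to exploit reversibility to turn this identification into an actual count of component sizes.

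First I would reduce to a statement about transformations. Since $\aut{A}$ is invertible, $\presm{\aut{A}}$ is a subsemigroup of the group $\pres{\aut{A}}$, and a finite subsemigroup of a group is a subgroup; hence $\pres{\aut{A}}$ is finite if and only if $\presm{\aut{A}}$ is finite, i.e. if and only if the family $\{\rho_{\mot{u}}\mid \mot{u}\in Q^*\}$ of induced maps is finite. Next I would use the self-similar structure: for $\mot{u}\in Q^n$, the maps induced by the states $\delta_{\mot{s}}(\mot{u})$, as $\mot{s}$ ranges over $\Sigma^*$, are exactly the \emph{sections} of $\rho_{\mot{u}}$; since these states are precisely the connected component of $\mot{u}$ in $\aut{A}^n$, the \emph{distinct} induced maps occurring in that component are exactly the distinct sections of $\rho_{\mot{u}}$. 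Thus finiteness of $\presm{\aut{A}}$ is equivalent to the existence of a uniform bound on the number of distinct sections of the $\rho_{\mot{u}}$.

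The bridge from distinct sections to actual component sizes is where reversibility enters. Each $\delta_s$ is a permutation of $Q^n$ and preserves the Nerode partition (if two states induce the same map, so do their images under $\delta_s$), so inside a component $\aut{C}$ the classes of equal induced map are blocks of imprimitivity of a transitive action, all of the same size $m$; writing $\sigma$ for the number of distinct sections of $\rho_{\mot{u}}$, we get $\#\aut{C}=m\,\sigma$. For the direction ``finite group $\Rightarrow$ bounded components'' the factor $\sigma$ is at once bounded by $\card{\pres{\aut{A}}}$, and it remains to bound the block size $m$ uniformly in $n$. For the converse, a bound $K$ on component sizes gives $\sigma\le K$, so every $\rho_{\mot{u}}$ has at most $K$ sections; I would then feed this into the covering relation recalled above --- each component $\aut{D}$ of $\aut{A}^{n+1}$ covers one $\aut{C}$ of $\aut{A}^{n}$ with $\#\aut{C}\mid\#\aut{D}$ --- organised as a finitely branching tree of components, and argue by König's lemma that bounded sizes force the coverings to become bijective past a uniform level, so that no new induced map is ever created and $\presm{\aut{A}}$ is finite.

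The main obstacle is exactly the control of multiplicities: translating ``few distinct sections'' into ``small orbit'' (bounding the block size $m$ in the forward direction) and, dually, ruling out an infinite proliferation of sections despite the per-element bound $\sigma\le K$ in the backward direction. Both hinge on reversibility through the permutational nature of the $\delta_s$ and the resulting equal-size block decomposition, together with the divisibility $\#\aut{C}\mid\#\aut{D}$ supplied by the covering structure; König's lemma then bridges the branchwise statements to a global bound, and the finite-subsemigroup-of-a-group fact converts the final count of transformations back into genuine finiteness of $\pres{\aut{A}}$.
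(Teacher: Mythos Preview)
The paper does not supply a proof of this proposition; it is quoted from \cite{klimann_ps:3state,dangeli_r:geometric}. Still, your outline has a genuine gap in both directions, and it stems from working on the wrong side of the duality.

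The connected components of $\aut{A}^n$ are, as the paper recalls just before the statement, the orbits of $\pres{\dual{\aut{A}}}$ acting on $Q^n$. Their sizes are therefore controlled by the \emph{dual} group, not by $\pres{\aut{A}}$ itself, and the argument in the cited sources uses exactly this: in the forward direction, finiteness of $\presm{\aut{A}}$ implies finiteness of $\presm{\dual{\aut{A}}}$ (the finiteness--duality result of~\cite{AKLMP12}), whence every orbit has size at most $\card{\presm{\dual{\aut{A}}}}$. Your decomposition $\#\aut{C}=m\,\sigma$ bounds $\sigma$ by $\card{\pres{\aut{A}}}$ but gives no handle on the block size $m$; you flag this as ``the main obstacle'' and do not resolve it, and indeed there is no way to bound $m$ from $\card{\pres{\aut{A}}}$ alone without essentially reproving the duality statement.

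In the backward direction your K\"onig step is correct --- bounded component sizes force all edge labels in $\otree[A]$ to equal~$1$ past some uniform level $N$ --- but the inference ``no new induced map is ever created'' does not follow. Label~$1$ only says each component has children of the same size; the \emph{number} of components, hence the total number of distinct $\rho_{\mot{u}}$, may still grow, so your per-component bound $\sigma\le K$ does not give global finiteness of $\presm{\aut{A}}$. The correct use of ``all labels $1$ past level $N$'' is again dual: if $g\in\pres{\dual{\aut{A}}}$ fixes $Q^N$ pointwise then, since $\mot{u}x$ is the unique element of its orbit with prefix $\mot{u}$ when the relevant label is~$1$, $g$ fixes $Q^{N+1}$ too, and inductively all of $Q^*$; hence $\pres{\dual{\aut{A}}}$ embeds in $\mathrm{Sym}(Q^N)$ and is finite, and one returns to $\pres{\aut{A}}$ via duality.
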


\begin{proposition}\label{prop-finite}
Let $\aut{A}=(Q,\Sigma,\delta,\rho)$ be an invertible-reversible Mealy automaton and let~\(\mot{u}\in Q^+\) be a non-empty word. The following conditions are
equivalent:
\begin{enumerate}[(i)]
\item \(\rho_{\mot{u}}\) has finite order,\label{i1}
\item the sizes of the connected components of~\((\mot{u}^n)_{n\in
  \N}\) are bounded,\label{i2}
\item there exists a word \(\mot{v}\) such that the sizes of the
  connected components of~\((\mot{vu}^n)_{n\in \N}\) are
  bounded,\label{i3}
\item for any word \(\mot{v}\), the sizes of the
  connected components of~\((\mot{vu}^n)_{n\in \N}\) are
  bounded.\label{i4}
\end{enumerate}
\end{proposition}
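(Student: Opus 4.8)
The plan is to prove the equivalence of the four conditions by establishing a cycle of implications, exploiting the structure theorem on connected components sketched just before the proposition and the characterization of finiteness in Proposition~\ref{prop-bounded-cc}.

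First I would clarify the meaning of condition \eqref{i1}. Since $\rho_{\mot{u}}$ is a permutation of each $\Sigma^k$, saying it has finite order means there is some $N$ with $\rho_{\mot{u}}^N = \mathrm{id}$ on all of $\Sigma^*$. The natural bridge to the component sizes is the observation that $\rho_{\mot{u}^n} = \rho_{\mot{u}}^n$ viewed appropriately through the induced action, so the order of $\rho_{\mot{u}}$ controls how the words $\mot{u}^n$ propagate under the dual action. Concretely, the connected component of $\mot{u}^n$ in $\aut{A}^{n|\mot{u}|}$ is the orbit of $\mot{u}^n$ under $\pres{\dual{\aut{A}}}$, and I expect the key computation to show that the size of this orbit is governed by the number of distinct states reachable by reading input words, which in turn is tied to the cycle structure of $\rho_{\mot{u}}$. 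The implication \eqref{i1}$\Rightarrow$\eqref{i2} should follow because finite order of $\rho_{\mot{u}}$ forces the tower $(\mot{u}^n)_n$ to stabilize: once $\rho_{\mot{u}}^N=\mathrm{id}$, appending further copies of $\mot{u}$ cannot create genuinely new copies in the sense of the divisibility discussion, so the component sizes stop growing.

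Next, the implications \eqref{i2}$\Rightarrow$\eqref{i4}$\Rightarrow$\eqref{i3} require transferring bounded-size information from the pure powers $\mot{u}^n$ to the prefixed sequence $\mot{v}\mot{u}^n$. Here I would invoke directly the structure theorem recalled before the statement: passing from a component $\aut{C}$ of $\aut{A}^n$ to a component $\aut{D}$ of $\aut{A}^{n+1}$ multiplies the size by an integer factor, and more generally prefixing by a fixed word $\mot{v}$ relates the component of $\mot{v}\mot{u}^n$ to that of $\mot{u}^n$ by a bounded multiplicative factor depending only on $\mot{v}$ (not on $n$), since $\card{Q}^{\card{\mot{v}}}$ caps how much prefixing can inflate a component. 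Thus bounded sizes for $(\mot{u}^n)$ give bounded sizes for $(\mot{v}\mot{u}^n)$ for every $\mot{v}$, yielding \eqref{i4}, and \eqref{i4}$\Rightarrow$\eqref{i3} is immediate by instantiating $\mot{v}$ to be, say, the empty word or any fixed choice. The remaining implication \eqref{i3}$\Rightarrow$\eqref{i1} closes the cycle: if some single prefix $\mot{v}$ keeps the components of $(\mot{v}\mot{u}^n)$ bounded, then by the same bounded-factor relation (applied in reverse, dividing out the prefix) the components of $(\mot{u}^n)$ are bounded as well, and bounded component sizes along this sequence force $\rho_{\mot{u}}$ to have finite order.

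The main obstacle I anticipate is making precise and rigorous the relationship between the \emph{order of the induced permutation} $\rho_{\mot{u}}$ and the \emph{sizes of the connected components} of $\mot{u}^n$. The clean direction is that infinite order of $\rho_{\mot{u}}$ forces unbounded component growth, which is essentially the content of Proposition~\ref{prop-bounded-cc} applied to the cyclic subsemigroup generated by $\rho_{\mot{u}}$; the subtlety is that we are not looking at the whole automaton but at the restricted sequence of components containing the specific words $\mot{u}^n$, so I would need to argue that the orbit of $\mot{u}^n$ under the dual group genuinely tracks the dynamics of $\rho_{\mot{u}}$ and cannot stay small while $\rho_{\mot{u}}$ grows. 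The reversibility hypothesis is essential here, as it guarantees the strong connectivity and the permutation structure of the transition functions $\delta_{\mot{s}}$ that make the copy-counting argument of the preceding paragraph valid; I expect to lean on it repeatedly to ensure the multiplicative (rather than merely additive or uncontrolled) behaviour of component sizes.
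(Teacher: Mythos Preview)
The paper does not prove this proposition: it is stated without proof as a known result imported from \cite{klimann_ps:3state} and \cite{dangeli_r:geometric}, so there is no in-paper argument to compare your outline against.

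On its own merits your plan is sound. The chain \eqref{i2}$\Rightarrow$\eqref{i4}$\Rightarrow$\eqref{i3}$\Rightarrow$\eqref{i2} works exactly through the factorization $\delta_{\mot{s}}(\mot{v}\mot{u}^n)=\delta_{\mot{s}}(\mot{v})\,\delta_{\rho_{\mot{v}}(\mot{s})}(\mot{u}^n)$: since $\rho_{\mot{v}}$ is a bijection on each $\Sigma^k$ (this is where invertibility enters), the suffix ranges over the whole connected component of $\mot{u}^n$, so the size of the component of $\mot{v}\mot{u}^n$ is sandwiched between that of $\mot{u}^n$ and $\card{Q}^{\length{\mot{v}}}$ times it. Note that your ``dividing out the prefix'' for \eqref{i3}$\Rightarrow$\eqref{i2} is really this surjection onto suffixes, and it genuinely uses invertibility rather than reversibility.

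The only place needing real work is \eqref{i1}$\Leftrightarrow$\eqref{i2}, which you rightly flag. For \eqref{i1}$\Rightarrow$\eqref{i2} your intuition is made precise by the identity $\delta_{\mot{s}}(\mot{u}^n)=\delta_{\mot{s}}(\mot{u})\,\delta_{\rho_{\mot{u}}(\mot{s})}(\mot{u})\cdots\delta_{\rho_{\mot{u}}^{\,n-1}(\mot{s})}(\mot{u})$, which becomes eventually periodic once $\rho_{\mot{u}}^N=\mathrm{id}$. For \eqref{i2}$\Rightarrow$\eqref{i1} do not try to invoke Proposition~\ref{prop-bounded-cc} on the nose (it concerns finiteness of the whole generated group, not the order of a single element); a clean route is pigeonhole on the finitely many isomorphism types of pointed connected Mealy automata with at most $M$ states over $\Sigma$: if all components of $\mot{u}^n$ have size at most $M$, two pointed automata $(\aut{C}_m,\mot{u}^m)$ and $(\aut{C}_n,\mot{u}^n)$ with $m<n$ must be isomorphic, forcing $\rho_{\mot{u}^m}=\rho_{\mot{u}^n}$ and hence finite order of $\rho_{\mot{u}}$.
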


\section{The Labeled Orbit Tree and the Order
  Problem}\label{sec-otree}
Most of the notions of this section have been introduced in~\cite{klimann_ps:3state}.

We build a tree capturing the links between the
connected components of consecutive powers of a reversible Mealy
automaton. See an example in Figure~\ref{fig-bella}. As recalled at
the end of this section, the existence of elements of infinite order
in the semigroup generated by an invertible-reversible automaton is closely related to some
path property of this tree.

Let \(\aut{A}=(Q,\Sigma,\delta,\rho)\) be a reversible Mealy automaton.
Consider the tree with vertices
the connected components of the powers of~\(\aut{A}\), and the
incidence relation built by adding an element of~\(Q\): for any~\(n\geq 0\),
the connected component of~\(\mot{u}\in Q^n\) is linked
to the connected component(s) of~\(\mot{u}x\), for any~\(x\in
Q\). This tree is called the \emph{orbit tree} of~\(\dual{\aut{A}}\)~\cite{gawron_ns:conjugation,grigorch_s:ergodic_decomposition}.
It can be seen as the
quotient of the tree~\(Q^*\) under the
action of the group~\(\pres{\dual{\aut{A}}}\).

We label any
edge \(\aut{C}\to\aut{D}\) of the orbit tree by the ratio
\(\frac{\#\aut{D}}{\#\aut{C}}\), which is always an integer by the
reversibility of~\(\aut{A}\). We call this labeled tree the
\emph{labeled orbit tree} of~\(\dual{\aut{A}}\)~\cite{klimann_ps:3state}.
We denote by~\(\otree[A]\) the labeled orbit tree of~\(\dual{\aut{A}}\).
Note that for each vertex of \(\otree[A]\), the sum of the labels of
all edges going down from this vertex always equals to \(\#Q\), the
size of~\(\aut{A}\).

Each vertex of \(\otree[A]\) is labeled by a connected automaton with
stateset in \(Q^n\), where \(n\) is the level of this vertex in the tree. By a
minor abuse, we can consider that each vertex is labeled by a finite
language in~\(Q^n\), or even by a word in~\(Q^n\).

Let \(\mot{u}\) be a (possibly infinite) word over \(Q\). The \emph{path
of}~\(\mot{u}\) in the orbit tree~\(\otree[A]\) is the unique
initial path going from the root through the connected
components of the prefixes of~\(\mot{u}\); \(\mot{u}\) can be called
\emph{a representative} of this initial path (we can say equivalently
that this path is \emph{represented} by~\(\mot{u}\) or that the word
\(\mot{u}\) \emph{represents} the path).

\begin{definition}
Let~\(\aut{A}\) be a reversible Mealy automaton
and~$\mathfrak{s}$ be a subtree of \(\otree[A]\). An
\emph{\(\mathfrak{s}\)-word} is a word in \(Q^*\cup Q^{\infty}\)
representing an initial path of \(\mathfrak{s}\). A \emph{cyclic
  \(\mathfrak{s}\)-word} is a word in \(Q^*\) whose all powers are
\(\mathfrak{s}\)-words (equivalently, it is an \(\mathfrak{s}\)-word
viewed as a cyclic word).
\end{definition}

The structure of an orbit tree is not arbitrary and it is possible to
identify some similarities inside this tree.

\begin{definition}
Let \(e\) and~\(f\) be two edges in the orbit tree~\(\otree[A]\).
We say that~\(e\) \emph{is liftable to}~\(f\) if each word of~\(\bot(e)\)
admits some word of~\(\bot(f)\) as a suffix.
\end{definition}

Consider \(\mot{u}$ in~$\top(e)\) and its suffix~\(\mot{v}\) in \(\top(f)\):
any state~\(x\in Q\) such that \(\mot{u}x\in\bot(e)\) satisfies~\(\mot{v}x\in \bot(f)\).
Informally, ``\(e\) liftable to \(f\)'' means
that what can happen after~\(\top(e)\) by following~\(e\) can also
happen after~\(\top(f)\) by following~\(f\). This condition is equivalent to a weaker one:

\begin{lemma}\label{lem-lift-one}
Let~\(\aut{A}\) be a reversible Mealy automaton, and let~\(e\) and \(f\) be two edges in the orbit tree~\(\otree[A]\).
If there exists a word of~$\bot(e)$ which admits a word of~$\bot(f)$ as suffix,
then $e$ is liftable to~$f$.
\end{lemma}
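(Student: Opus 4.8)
The plan is to show that the suffix relation, witnessed by a single pair of words, propagates to \emph{all} words of~\(\bot(e)\), by exploiting the fact that \(\bot(e)\) and~\(\bot(f)\) are connected components and that the dual action ``localises'' on suffixes. Concretely, fix a word \(\mot{w}\in\bot(e)\) admitting a word \(\mot{z}\in\bot(f)\) as a suffix, and write \(\mot{w}=\mot{p}\mot{z}\) with \(\mot{p}\in Q^{k}\), where \(k\) is the difference of levels between \(\bot(e)\) and~\(\bot(f)\). I want to prove that every \(\mot{w}'\in\bot(e)\) likewise admits some word of~\(\bot(f)\) as a suffix, which is exactly the definition of \(e\) being liftable to~\(f\).

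First I would use reversibility: since \(\aut{A}\) is reversible, each connected component of a power of~\(\aut{A}\) is strongly connected, so any \(\mot{w}'\in\bot(e)\) is reachable from~\(\mot{w}\) along forward transitions of \(\aut{A}^{|\mot{w}|}\). These transitions are exactly the maps~\(\delta_i\), \(i\in\Sigma\), whence there is a word \(\mot{s}\in\Sigma^*\) with \(\mot{w}'=\delta_{\mot{s}}(\mot{w})\). I would also recall that \(\bot(f)\), being a connected component, is precisely an orbit of the action of \(\pres{\dual{\aut{A}}}\) on words of length~\(|\mot{z}|\).

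The crux is then the factorisation of the dual action over a concatenation, already used in Section~\ref{sec-cc} in the form \(\delta_{\mot{s}}(\mot{u}x)=\delta_{\mot{s}}(\mot{u})\,\delta_{\rho_{\mot{u}}(\mot{s})}(x)\), which extends verbatim from a single letter~\(x\) to a whole suffix~\(\mot{z}\):
\[
\delta_{\mot{s}}(\mot{p}\mot{z})=\delta_{\mot{s}}(\mot{p})\,\delta_{\rho_{\mot{p}}(\mot{s})}(\mot{z})\enspace.
\]
Applying this to \(\mot{w}=\mot{p}\mot{z}\) shows that the suffix of \(\mot{w}'=\delta_{\mot{s}}(\mot{w})\) of length~\(|\mot{z}|\) is exactly \(\mot{z}':=\delta_{\rho_{\mot{p}}(\mot{s})}(\mot{z})\). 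Since \(\rho_{\mot{p}}(\mot{s})\) is again a word over~\(\Sigma\), the word~\(\mot{z}'\) is obtained from~\(\mot{z}\) by the dual action and therefore lies in the same orbit as~\(\mot{z}\), that is, \(\mot{z}'\in\bot(f)\). Hence \(\mot{w}'\) admits the word \(\mot{z}'\in\bot(f)\) as a suffix, and as \(\mot{w}'\) was arbitrary this establishes the claim.

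The only genuinely delicate point is the factorisation identity: one must check that when the dual word~\(\mot{s}\) is applied to \(\mot{p}\mot{z}\), its effect on the suffix~\(\mot{z}\) is again the application of a dual word, namely \(\rho_{\mot{p}}(\mot{s})\), the image of~\(\mot{s}\) after ``crossing'' the prefix~\(\mot{p}\). This is the standard cross-diagram commutation for Mealy automata, and once it is in hand the membership \(\mot{z}'\in\bot(f)\) is immediate from the orbit description of connected components, so that no further computation is required.
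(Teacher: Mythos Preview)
Your argument is correct and is exactly the natural one: use that \(\bot(e)\) is a single \(\pres{\dual{\aut{A}}}\)-orbit (by reversibility), then apply the cross-diagram identity \(\delta_{\mot{s}}(\mot{p}\mot{z})=\delta_{\mot{s}}(\mot{p})\,\delta_{\rho_{\mot{p}}(\mot{s})}(\mot{z})\) to see that the length-\(|\mot{z}|\) suffix of any \(\mot{w}'\in\bot(e)\) lies in the orbit of~\(\mot{z}\), hence in~\(\bot(f)\). Note that the paper does not actually supply a proof of this lemma---it is stated without proof and imported from~\cite{klimann_ps:3state}---so there is nothing to compare against beyond observing that your reasoning matches the mechanism already displayed in Section~\ref{sec-cc} for single-letter extensions.
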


Obviously if \(e\) is liftable to~\(f\), then \(f\) is closer to the root of the orbit tree.
The fact that an edge is liftable to another one reflects a deeper relation stated below.
The following lemma is one of the key observations. 

\begin{lemma}\label{lem-liftable}
Let \(e\) and~\(f\) be two edges in the orbit tree~\(\otree[A]\). If \(e\) is
liftable to~\(f\), then the label of~\(e\) is less than or equal to the label
of~\(f\).
\end{lemma}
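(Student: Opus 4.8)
The plan is to compare the labels of $e$ and $f$ by relating them to how connected components grow along each edge, using the suffix structure guaranteed by liftability.

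Let me set up notation. Let $e = (\aut{C} \to \aut{D})$ be an edge at level $n$, so $\bot(e) = \aut{D}$ sits in $Q^{n+1}$ and $\top(e) = \aut{C}$ sits in $Q^n$. Similarly let $f = (\aut{C}' \to \aut{D}')$ be at level $m \le n$. The label of $e$ is the ratio $\#\aut{D}/\#\aut{C}$, which (by the discussion following the definition of the orbit tree) counts how many distinct states $x \in Q$ can extend a fixed word $\mot{u} \in \aut{C}$ to a word $\mot{u}x$ lying in $\aut{D}$. Concretely, fix $\mot{u} \in \aut{C}$ and set
\[
E_e = \{ x \in Q \mid \mot{u}x \in \aut{D} \}.
\]
By reversibility this set has the same cardinality for every choice of $\mot{u} \in \aut{C}$, and that common cardinality is exactly the label of $e$. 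Define $E_f$ analogously for a word $\mot{v} \in \aut{C}'$, so the label of $f$ equals $\#E_f$.

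First I would make $\mot{u}$ and $\mot{v}$ compatible: since $e$ is liftable to $f$, the word $\mot{u} \in \top(e)$ admits a suffix $\mot{v} \in \top(f)$ (this is the content of the remark following the liftability definition, which I may take as given). With this fixed pair, the key step is to show $E_e \subseteq E_f$. Indeed, take any $x \in E_e$, so $\mot{u}x \in \bot(e) = \aut{D}$. The liftability hypothesis says every word of $\bot(e)$ admits a word of $\bot(f)$ as a suffix; applying this to $\mot{u}x$, and using that $\mot{v}$ is the chosen suffix of $\mot{u}$, the suffix of $\mot{u}x$ lying in $\bot(f)$ must be precisely $\mot{v}x$ (the suffix of length $m+1$). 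Hence $\mot{v}x \in \bot(f) = \aut{D}'$, which means $x \in E_f$. Taking cardinalities gives
\[
\text{label}(e) = \#E_e \le \#E_f = \text{label}(f),
\]
which is the desired inequality.

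The main obstacle, and the step to handle with care, is the claim that the $\bot(f)$-suffix of $\mot{u}x$ is exactly $\mot{v}x$ and not some other length-$(m{+}1)$ suffix. This is where I expect to lean on the precise statement in the paragraph after the liftability definition: namely that if $\mot{u} \in \top(e)$ has suffix $\mot{v} \in \top(f)$, then any $x$ with $\mot{u}x \in \bot(e)$ satisfies $\mot{v}x \in \bot(f)$. So rather than re-deriving the suffix alignment from scratch, I would cite that observation directly, which pins down the correspondence $x \leftrightarrow x$ between extensions and makes the inclusion $E_e \subseteq E_f$ immediate. A secondary point worth verifying is that the label is genuinely independent of the chosen representative word $\mot{u}$; this follows from the transitivity of the $\pres{\dual{\aut{A}}}$-action on each connected component together with reversibility (the permutation argument already recalled in Section~\ref{sec-cc}), so the counts $\#E_e$ and $\#E_f$ are well defined and the inequality is representative-free.
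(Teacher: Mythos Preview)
Your argument is correct: interpreting the label of an edge as the size of the extension set $E_e=\{x\in Q:\mot{u}x\in\bot(e)\}$ and then using the suffix compatibility spelled out right after the liftability definition to obtain $E_e\subseteq E_f$ is exactly the right mechanism, and your care about representative-independence is well placed. Note that the paper itself does not prove this lemma---it is imported from~\cite{klimann_ps:3state}---so there is no in-paper proof to compare against, but your proof is the natural one and almost certainly coincides with the original.
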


The notions of children of an edge and of being liftable to it are not
linked, but it is interesting to consider their intersection.

\begin{definition}
Let \(e\) and \(f\) be two edges in an orbit tree: \(f\) is a
\emph{legitimate child} of \(e\) if \(e\) is its parent and \(f\) is
liftable to \(e\).
\end{definition}

It is straightforward to notice that the label of an edge in an orbit
tree is equal to the sum of the labels of its legitimate children.

\bigskip

The notion of liftability can be generalized to paths:
\begin{definition}
Let \(\mot{e}=(e_i)_{i\in I}\)
and~\(\mot{f}=(f_i)_{i\in I}\) be two paths of the same (possibly
infinite) length in the orbit tree~\(\otree[A]\). The path \(\mot{e}\) \emph{is liftable to} the path
\(\mot{f}\) if, for any \(i\in I\), the edge~\(e_i\) is liftable to the edge \(f_i\).
\end{definition}

\begin{definition}\label{def-self-liftable}
Let~\(\aut{A}\) be a bireversible Mealy automaton
and~$\mathfrak{s}$ be a (possibly infinite) path or subtree of 
\(\otree[A]\).
For~$k>0$,
$\mathfrak{s}$ is \emph{$k$-self-liftable}
whenever any path in~$\mathfrak{s}$ starting at level~$i+k$ is
liftable to a path in~$\mathfrak{s}$ starting at level~$i$, for any~$i\geq 0$.
A path or a subtree is \emph{self-liftable} if it is \(k\)-self-liftable for some \(k>0\).
\end{definition}

\label{sec-char}

We recall here a characterization of the existence of elements of infinite order
in the semigroup generated by a reversible Mealy automaton~$\aut{A}$
in terms of path properties of the associated orbit tree~$\otree[A]$~\cite{klimann_ps:3state}.

\begin{definition}
Any branch labeled by a word not suffixed by~$1^\omega$ is called~\emph{active}.
\end{definition}

\begin{theorem}\label{thm-char}\cite{klimann_ps:3state}
The semigroup
generated by an invertible-reversible automaton~$\aut{A}$ admits elements of infinite order
if and only if the orbit tree~\(\otree[A]\) admits an active self-liftable branch.
\end{theorem}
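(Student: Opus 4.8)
The plan is to argue by contradiction and to reduce everything to the criterion of Theorem~\ref{thm-char}, producing an active self-liftable branch of $\otree[A]$ by a periodicity argument driven by the jungle tree of Section~\ref{sec-jungle}. First I would split off the easy case: if $\auta$ is invertible-reversible but not bireversible, then it cannot generate an infinite Burnside group by~\cite{GKP}, so I may assume $\auta$ bireversible --- precisely the hypothesis under which self-liftability is defined. Assume now, for contradiction, that the connected bireversible automaton $\auta$ of prime size generates an infinite Burnside group. Since $\pres{\auta}$ is infinite, Proposition~\ref{prop-bounded-cc} gives that the connected components of the powers of $\auta$ have unbounded size; as each vertex of $\otree[A]$ has out-degree at most $\#Q$ (the labels below it sum to $\#Q$), König's lemma produces a branch along which the component sizes are unbounded, \ie an \emph{active} branch. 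On the other side, the Burnside hypothesis forces every $\rho_{\mot{w}}$, $\mot{w}\in Q^+$, to have finite order, so by Proposition~\ref{prop-finite} the components of $(\mot{w}^n)_n$ are bounded for every $\mot{w}$. The whole proof then comes down to contradicting this: I will convert the active branch just obtained into an \emph{eventually periodic} active branch, whose period supplies a word $\mot{w}$ with $\rho_{\mot{w}}$ of infinite order.

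The link with Theorem~\ref{thm-char} is an observation I would record first: a word $\mot{w}$ with $\rho_{\mot{w}}$ of infinite order yields a branch $\mot{w}^\infty$ that is both active (Proposition~\ref{prop-finite}) and self-liftable. The self-liftability is free from the periodicity: the segment of $\mot{w}^\infty$ reading the $(n{+}1)$-st copy of $\mot{w}$ ends at the component of $\mot{w}^{n+1}$, which admits $\mot{w}^{n}$ --- a word of the component ending the $n$-th segment --- as a suffix, so Lemma~\ref{lem-lift-one} makes every edge of one segment liftable to the matching edge of the previous one, uniformly; this is exactly $|\mot{w}|$-self-liftability. Through Theorem~\ref{thm-char}, such a branch gives an element of infinite order in $\presm{\auta}$, hence in $\pres{\auta}$, contradicting the Burnside assumption. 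The task is therefore reduced to extracting a periodic pattern from the active branch.

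This extraction is the core of the paper and the purpose of the jungle tree. The active branch need not itself be periodic, so I cannot read a period off it directly; instead I would record, along the branch, a \emph{finite} amount of data attached to each component --- its liftability type together with the multiset of labels hanging below it --- and encode these as the vertices of the jungle tree of Section~\ref{sec-jungle}. Pigeonholing this finite data along the active branch should return two levels carrying the same type with the deeper edge liftable to the shallower one; closing up the intervening block into a loop then yields the desired periodic branch. The hard part --- and the exact place where \emph{primality} is indispensable --- is to ensure that the loop so obtained stays \emph{active}, \ie that its labels do not collapse to $1^\omega$. Along any self-liftable branch the labels are non-increasing (Lemma~\ref{lem-liftable}), so an active tail must stabilise at some value $\geq 2$ carried by a unique legitimate child at each step, a very rigid configuration; the difficulty is to guarantee that the unboundedness delivered by König's lemma can always be channelled into such a persistently active loop rather than dispersed among many short-lived components. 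I expect primality to enter exactly here, by rigidifying the partitions of $\#Q$ that may occur as the label multiset below a vertex and thereby letting the pigeonhole on the jungle tree be carried out without destroying activeness; turning this into a proof is where the bulk of Section~\ref{sec-jungle} is spent.
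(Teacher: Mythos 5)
Your proposal does not prove Theorem~\ref{thm-char}. First, a point of comparison: the paper contains no proof of this statement at all --- it is recalled verbatim from~\cite{klimann_ps:3state} --- so the only question is whether your argument stands on its own, and it does not. What you do establish correctly is the easy implication: if some $\rho_{\mot{w}}$ has infinite order, then the path of $\mot{w}^\infty$ in $\otree[A]$ is $|\mot{w}|$-self-liftable (every prefix of $\mot{w}^\infty$ deprived of its first $|\mot{w}|$ letters is again a prefix of $\mot{w}^\infty$, so Lemma~\ref{lem-lift-one} applies edge by edge) and active (by Proposition~\ref{prop-finite} the component sizes of $(\mot{w}^n)_n$ are unbounded, so the labels along the branch cannot be eventually~$1$). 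But for the converse --- an active self-liftable branch yields an element of infinite order, which is the substantial direction --- you write ``Through Theorem~\ref{thm-char}, such a branch gives an element of infinite order'', i.e.\ you invoke the very statement you are supposed to prove. This is circular, and nothing elsewhere in your sketch substitutes for it.

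The remainder of the proposal compounds the problem by arguing for a different statement. You frame everything around a \emph{connected} automaton of \emph{prime} size generating an infinite \emph{Burnside} group: that is the setting of Theorem~\ref{thm:main}, whereas Theorem~\ref{thm-char} concerns an arbitrary invertible-reversible automaton, with no connectedness, primality, or Burnside hypothesis --- indeed the paper explicitly notes that primality is not used before Theorem~\ref{thm:Qprefix}, so your claim that primality is ``indispensable'' here cannot be right. Your plan to use the jungle tree to upgrade an active branch into an active self-liftable one is also structurally impossible as stated: the jungle tree of Section~\ref{sec-jungle} is defined only under the hypothesis that $\otree[A]$ has \emph{no} active self-liftable branch, so it cannot serve to construct one; in the paper's architecture that hypothesis is exploited in the opposite direction, to bound the orders of cyclic $\jungle$-words (Proposition~\ref{prop-const-order}) and, via Corollary~\ref{cor-id-follow} and Zelmanov's theorem, to prove $\gauta$ finite. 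A genuine proof of the missing direction must construct, from an active self-liftable branch, a single word whose powers all trace paths with unbounded label product (so that Proposition~\ref{prop-finite} yields infinite order) --- a pigeonhole extraction along the branch of the kind carried out in~\cite{klimann_ps:3state} --- and none of the Burnside or primality machinery you invoke plays any role in it.
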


\section{Jungle Trees}
\label{sec-jungle}
Our main result being known for non bireversible
automata~\cite{GKP}, we focus on the bireversible case. All the tools
introduced in this section are new. They are used to get rid of the
particularity of the stateset of size~3 in~\cite{klimann_ps:3state}.

Let $\auta=(Q,\Sigma, \delta, \rho)$ be a connected bireversible Mealy
automaton 
with no active self-liftable branch. From
Theorem~\ref{thm-char}, all the elements of the semigroup
\(\presm{\aut{A}}\) have finite order. In this section we introduce
the tools to prove that such an automaton generates a finite group
(Theorem~\ref{thm:main}).

\subsection{Jungle trees and stems}
We focus on some particular subtrees of $\ota$: 
\begin{definition}
Let \(\mot{e}\) be a finite initial 1-self-liftable path such that:
\begin{itemize}
\item $\bot(\mot{e})$ has at least two legitimate children;
\item every legitimate child of $\bot(\mot{e})$ has label 1.
\end{itemize}
 The \emph{jungle tree} \(\jungle(\mot{e})\) of \(\mot{e}\) is the subtree
of \(\otree[A]\) build as follows:
\begin{itemize}
\item it contains the path \(\mot{e}\) --- its \emph{trunk};
\item it contains the regular tree rooted by \(\bot(\mot{e})\), and
  formed by all the edges which are descendant of
  \(\bot(\mot{e})\) and  liftable to the lowest (\ie the last) edge of $\mot{e}$.
\end{itemize}
The \emph{arity} of this jungle tree is the number of legitimate
children of $\bot(\mot{e})$. Since every legitimate child has label $1$, it is
also the label of the last edge of $\mot{e}$.

Words in $\bot(\mot{e})$ are called \emph{stems}. They have all the
same length which is the length of the trunk of \(\jungle(\mot{e})\).

A tree is a \emph{jungle tree} if it is the jungle tree of some
finite initial 1-self-liftable path.
\end{definition}

Graphically, a jungle tree starts with a linear part whose labels
decrease (its trunk) and eventually ends as a regular tree with all
labels~1. Any jungle tree is 1-self-liftable.

Note that: (i) there exists at least one jungle tree, since
\(\aut{A}\) has no active self-liftable branch by hypothesis; (ii) there are
finitely many jungle trees.

{\bf From now on, \(\jungle\) denotes a jungle tree of \(\aut{A}\),
  whose trunk has length~\(n\).}

As shown below, any cyclic \(\jungle\)-words has finite order.

\begin{remark}
The
existence of cyclic \(\jungle\)-words is ensured by the simple fact
that any \(\jungle\)-word of length~\(n\times(1+\#Q^{n})\)
admits a cyclic \(\jungle\)-word as a factor.
\end{remark}

\begin{proposition}\label{prop-const-order}
Every
cyclic \(\jungle\)-word induces an action of finite order, bounded by a
uniform constant depending on~\(\jungle\).
\end{proposition}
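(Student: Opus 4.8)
The plan is to deduce finiteness of the order from the boundedness of the connected components met along the branch represented by a cyclic \(\jungle\)-word, and then to upgrade this to a uniform bound by exploiting the rigidity of the jungle tree below its trunk.

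First I would record the component sizes along the branch of a cyclic \(\jungle\)-word \(\mot{w}\), of length \(m=|\mot{w}|\). Since every \(\mot{w}^{k}\) is a \(\jungle\)-word, the infinite word \(\mot{w}^{\omega}\) labels a branch \(B\) of \(\jungle\) whose length-\(n\) prefix runs through the trunk \(\mot{e}\); hence the connected component of \(\mot{w}^{k}\) is the vertex of \(\otree[A]\) on \(B\) at level \(km\), whose size is the product of the edge-labels of \(B\) down to that level. Along the trunk this product reaches \(\#\bot(\mot{e})\), the number \(P\) of stems, and below the trunk every label equals \(1\) (each such edge is liftable to the last trunk edge and the tree is \(1\)-self-liftable, so Lemma~\ref{lem-liftable} forces its label to be \(1\)). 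Consequently the size of the component of \(\mot{w}^{k}\) is at most \(P\) for every \(k\), and exactly \(P\) as soon as \(km\ge n\); this bound \(P\) depends only on \(\jungle\), not on \(\mot{w}\). Applying Proposition~\ref{prop-finite} (the implication from bounded components of \((\mot{w}^{k})_{k}\) to finite order of \(\rho_{\mot{w}}\)) already yields that \(\rho_{\mot{w}}\) has finite order.

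To control the order itself, I would describe \(g=\rho_{\mot{w}}\) as the transformation computed by the finite connected automaton \(\aut{C}\), the component of \(\mot{w}\) in \(\auta^{m}\), read from the initial state \(\mot{w}\): after reading \(\mot{s}\in\Sigma^{*}\) the current state is \(\delta_{\mot{s}}(\mot{w})\in\aut{C}\), and every section of \(g\) is of the form \(\rho_{\mot{v}}\) with \(\mot{v}\in\aut{C}\), so there are at most \(P\) of them. Writing \(d_{\ell}\) for the order of the permutation induced by \(g\) on \(\Sigma^{\ell}\), the sequence \(d_{1}\mid d_{2}\mid\cdots\) is nondecreasing and, by the previous paragraph, bounded, so it stabilizes at \(\operatorname{ord}(g)\) from some level \(\ell_{0}\) on. The key structural input is that, for \(km\ge n\), the length-\(n\) prefix map identifies the component \(\aut{C}_{k}\) of \(\mot{w}^{k}\) with the component of the stems equivariantly for the dual action (the prefix map commutes with every \(\delta_{i}\) and is a bijection because the sizes agree), so below the trunk the automaton governing \(g^{k}\) is, up to this identification, the fixed bireversible automaton carried by \(\bot(\mot{e})\).

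The main obstacle is this last bounding step: turning the uniform bound \(P\) on component sizes and the finiteness of admissible below-trunk data into a genuine bound on \(\operatorname{ord}(g)\). My plan is to argue that the order can increase only while some section of \(g^{d_{\ell}}\) still acts nontrivially on the first letter, that each such increase multiplies \(d_{\ell}\) by the order of a permutation read off from the at most \(P\) sections (hence by a factor bounded in terms of \(P\) and \(\#\Sigma\)), and that the number of increases is controlled because, once \(\ell\ge n\), the sections are governed by the single fixed stem-automaton attached to \(\bot(\mot{e})\). Since there are only finitely many jungle trees and, for each, only finitely many such stem-automata and induced first-letter permutations, these data range over a finite set determined by \(\jungle\); taking \(C(\jungle)\) to be the least common multiple of the orders realizable on \(\Sigma^{\le \ell_{0}}\) over this finite set then bounds \(\operatorname{ord}(\rho_{\mot{w}})\) uniformly. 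I expect the delicate part to be making the stabilization level \(\ell_{0}\) explicitly independent of the (unbounded) length \(m\) of \(\mot{w}\), for which the equivariant prefix identification of \(\aut{C}_{k}\) with the stem-component is the crucial leverage.
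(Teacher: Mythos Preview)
Your first two paragraphs are correct and essentially reproduce the paper's argument: since $\mot{w}^{\omega}$ represents a branch of $\jungle$, the edge labels along it are ultimately~$1$, so every connected component of $\mot{w}^{k}$ has size at most $P=\#\bot(\mot{e})$, and Proposition~\ref{prop-finite} gives finite order. The paper's own proof is no more detailed than this on that point.

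The gap is in your route to the \emph{uniform} bound. You correctly observe that for $km\ge n$ the length-$n$ prefix map is a bijection from the component $\aut{C}_k$ of $\mot{w}^{k}$ onto $\bot(\mot{e})$, equivariant for every $\delta_i$. But this map does \emph{not} intertwine the output functions: if $\mot{z}\in\aut{C}_k$ has length-$n$ prefix $\mot{s}$ and suffix $\mot{z}'$, then $\rho_{\mot{z}}(i)=\rho_{\mot{z}'}(\rho_{\mot{s}}(i))$, which need not equal $\rho_{\mot{s}}(i)$. So the pointed Mealy automata $(\aut{C}_k,\mot{w}^{k})$ share only the underlying $\delta$-graph with $\bot(\mot{e})$; as Mealy automata they are genuinely different. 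Your conclusion that ``the automaton governing $g^{k}$ is, up to this identification, the fixed bireversible automaton carried by $\bot(\mot{e})$'' therefore fails in the only sense that would control the order, and with it the plan to make $\ell_0$ independent of $m$ collapses. The section-and-stabilization machinery is fighting a difficulty that has a one-line resolution.

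The argument the paper is pointing at with ``a constant which depends on the connected component at the end of the trunk'' is a direct pigeonhole on these pointed automata. For each $k\ge 1$ the pair $(\aut{C}_k,\mot{w}^{k})$ is a pointed connected reversible Mealy automaton over $\Sigma$ with at most $P$ states; up to isomorphism there are only finitely many such objects, say $M=M(P,\#\Sigma)$. Hence among $k=1,\dots,M{+}1$ two of them are isomorphic as pointed Mealy automata, say for $k_1<k_2$. Such an isomorphism preserves the action of the marked state on $\Sigma^{*}$, so $\rho_{\mot{w}}^{k_1}=\rho_{\mot{w}^{k_1}}=\rho_{\mot{w}^{k_2}}=\rho_{\mot{w}}^{k_2}$ and thus $\rho_{\mot{w}}^{k_2-k_1}=\mathrm{id}$. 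Consequently $\mathrm{ord}(\rho_{\mot{w}})\le M$, a bound depending only on $P$ and $\#\Sigma$, hence only on $\jungle$.
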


\begin{proof}
Let \(\mot{u}\) be a cyclic \(\jungle\)-word, then, for any integer~\(k>n\), \(\mot{u}^k\) is a \(\jungle\)-word. By the definition of a
jungle tree, the label of the path of
\(\mot{u}^{\omega}\) is ultimately~1 and,
by~Proposition~\ref{prop-bounded-cc}, the action induced by~\(\mot{u}\)
has finite order, bounded by a constant which depends on the connected
component at the end of the trunk of~\(\jungle\).
\end{proof}

Because of the self-liftability of \(\jungle\), any factor of a
\(\jungle\)-word is itself a \(\jungle\)-word. Hence any factor of
length~\(n\) of a \(\jungle\)-word is a stem. And by the construction of 
\(\jungle\), the end of its trunk has only one vertex whose label is hence a
connected component, and all the stems are states of that same connected component.

\begin{myremark}\label{rem:canfollow}
If \(\mot{uv}\) is a \(\jungle\)-word, with \(|\mot{v}|\geq n\), what
can follow \(\mot{uv}\) in \(\jungle\) is independant
from~\(\mot{u}\). In particular, if \(\mot{vw}\) is also a
\(\jungle\)-word, then so is \(\mot{uvw}\).
\end{myremark}

\begin{definition}
Let \(\jungle\) be a jungle tree of trunk of length~\(n\). A
\emph{liana covering up \(\jungle\)} is a language of
\(\jungle\)-words, of the form \(\Li{\mot{}w}\), where
\(\mot{w}\in Q^n\) is a stem, and \(L_{\mot{w}}\subseteq
Q^*\cup Q^{\infty}\) is a prefix-preserving language which, seen as a
tree, is regular of the same arity as \(\jungle\).

Each vertex of \(\jungle\) has exactly one representative in
\(\Li{\mot{w}}\). For each stem $\mot{w}$ there is exactly one suitable $L_{\mot{w}}$.
\end{definition}

\begin{myremark}\label{rem:new-liana}
Let \(\mot{w}L_{\mot{w}}\) be a liana covering up a jungle tree
\(\jungle\) and \(\mot{uv}\) be a finite \(\jungle\)-word such that
\(|\mot{v}| = n\): if \(L_{\mot{v}}\) is
the greatest language such that \(\mot{uv}L_{\mot{v}}\subseteq
\mot{w}L_{\mot{w}}\), then \(\mot{v}L_{\mot{v}}\) is also a liana
covering up~\(\jungle\).
\end{myremark}

\medskip

In what follows, we try to better understand the stucture of jungle
trees and lianas. Let  \(S=\mot{s}L_{\mot{s}}\) be a liana covering up
\(\jungle\) (\(\mot{s}\in Q^n\)). Our goal is to prove the following
result:

\begin{theorem}\label{thm:walk-on-liana}\label{thm:ubiquity}
Let \(\mot{u}\) be a factor in \(S\). Then \(\mot{u}\) has the
following property:
\phantomsection
\begin{equation}\label{prop:W}
\text{If }\mot{uv}\in Q^*\text{ is a factor in }S\text{, then
}\mot{u}\text{  exists further in S.}\tag{{\bf Ubiquity}}
\end{equation}
More formally: if \(\mot{tuv}\in S\), there exists \(\mot{w}\in Q^*\)
such that \(\mot{tuvwu}\in S\).
\end{theorem}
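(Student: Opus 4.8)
The plan is to reduce everything to the combinatorics of a single finite graph on the states of the end‑of‑trunk component, and to exhibit enough symmetry (coming from bireversibility) to make that graph strongly connected. Write $\aut C=\bot(\mot e)$ for the connected component at the end of the trunk, so the stems are exactly the states of $\aut C$, and let $a$ be the arity of $\jungle$. Introduce the \emph{stem graph} $\Gamma$ on the states of $\aut C$: from a stem $\mot p=p_1\cdots p_n$ put an edge to $p_2\cdots p_n x$ for each of the $a$ letters $x$ with $\mot px$ a $\jungle$-word (one per legitimate child of $\aut C$, since those children have label $1$). By Remark~\ref{rem:canfollow} this is well defined, by self-liftability each successor again lies in $\aut C$, and by Remark~\ref{rem:new-liana} the subtree of $S$ hanging below any node whose last $n$ letters form a stem $\mot w$ is exactly the unique liana $\mot wL_{\mot w}$. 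Hence walking in $S$ from the root stem $\mot s$ realises precisely the walks in $\Gamma$ issued from $\mot s$; let $R$ be the set of stems reachable from $\mot s$. Writing $F(\mot w)$ for the set of finite factors of $\mot wL_{\mot w}$, re-rooting one liana below a node of another shows that once we know any two stems of $R$ are mutually reachable in $\Gamma$, all $F(\mot w)$ for $\mot w\in R$ coincide (factors of a subtree are factors of the whole tree).

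Granting that $R$ is strongly connected, the theorem follows. Let $\mot p_1$ be the stem formed by the last $n$ letters of $\mot{tuv}$; it lies in $R$, so $\mot u\in F(\mot s)=F(\mot p_1)$. Pick any node of $\mot p_1L_{\mot p_1}$ at depth at least $n$ with suffix stem $\mot w^\ast\in R$; since $F(\mot w^\ast)=F(\mot p_1)\ni\mot u$, the word $\mot u$ occurs in the sub-liana rooted there, hence at a position $\ge n$ inside $\mot p_1L_{\mot p_1}$. Truncating just after this occurrence gives $\mot p_1\mot w\mot u\in\mot p_1L_{\mot p_1}$ with a genuine $\mot w\in Q^*$, which under the identification of $\mot p_1L_{\mot p_1}$ with the subtree of $S$ below $\mot{tuv}$ becomes $\mot{tuvwu}\in S$. (Relocating the occurrence to depth $\ge n$ is exactly what absorbs the boundary case where $\mot u$ starts near the root.)

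The heart of the matter, and the step I expect to be hardest, is the strong connectivity of $R$, which I would obtain from \emph{balancedness}: in $\Gamma$ every stem has in- and out-degree equal to $a$. The out-degree is $a$ by construction. For the in-degree, fix a legitimate child $\aut D$ of $\aut C$ and consider the drop-the-first-letter map $\sigma\colon\aut D\to\aut C$; it is well defined by self-liftability, and $\#\aut D=\#\aut C$ since the edge $\aut C\to\aut D$ has label $1$. I claim $\sigma$ is a bijection, so each stem has exactly one $\sigma$-preimage in each of the $a$ children, giving in-degree $a$. By equal cardinalities it suffices to prove surjectivity, and here I would use the cocycle identity $\sigma\bigl(\delta_{\mot s}(x\mot w)\bigr)=\delta_{\rho_x(\mot s)}(\mot w)$, obtained by unfolding the dual recursion $\delta_i(x\mot w)=\delta_i(x)\,\delta_{\rho_x(i)}(\mot w)$ and recognising the transported input as $\rho_x(\mot s)$. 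Fixing $x\mot w\in\aut D$ (so $\mot w=\sigma(x\mot w)\in\aut C$), as $\mot s$ ranges over $\Sigma^*$ the word $\rho_x(\mot s)$ ranges over all of $\Sigma^*$ because $\aut A$ is invertible, so $\delta_{\rho_x(\mot s)}(\mot w)$ ranges over the whole orbit of $\mot w$, namely $\aut C$ (on the finite set $\aut C$ the monoid generated by the permutations $\delta_i$ has the same orbits as the group $\pres{\dual{\aut A}}$). Thus every target is reached, proving surjectivity and hence balancedness.

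Finally, balancedness closes the argument combinatorially: no edge leaves $R$ (successors of reachable stems are reachable), so counting edges across the cut $(R,\ \aut C\setminus R)$ forces, by the balance condition, that no edge enters $R$ either; being balanced and weakly connected (every vertex is reached from $\mot s$), $R$ is strongly connected by the standard cut argument. The only delicate points to check in full are that Remark~\ref{rem:new-liana} really identifies sub-liana factor sets, so that the equalities $F(\mot w)=F(\mot w')$ are legitimate, and the routine book-keeping turning a deep occurrence into the required $\mot w$; the genuine combinatorial core is the surjectivity of $\sigma$ above, which is exactly where bireversibility (invertibility of $\rho_x$ together with the transitivity of the dual action on $\aut C$) is used.
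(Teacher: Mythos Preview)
Your argument is correct and takes a route genuinely different from the paper's. The paper first reduces to stems (as you do implicitly), then locates a single stem $\mot{u}_0$ with the Ubiquity property by an exhaustion walk along a branch of $S$: if the current candidate $\mot{u}_0$ fails, follow a continuation after which $\mot{u}_0$ no longer occurs and take the next stem; only finitely many stems can be discarded, so this terminates. It then transfers Ubiquity from $\mot{u}_0$ to an arbitrary stem $\mot{u}$ by conjugating through the $\delta$-action: writing $\delta_{\mot a}(\mot{u}_0)=\mot u$ and using Ubiquity of $\mot{u}_0$ to obtain a block $\mot{u}_0\mot{v}_0\mot{w}_0$ followed again by $\mot{u}_0$, one iterates that block until $\rho_{(\mot{u}_0\mot{v}_0\mot{w}_0)^{\alpha}}$ fixes $\mot a$ (a permutation of the finite set $\Sigma^{|\mot a|}$ has finite order), at which point the image under $\delta_{\mot a}$ exhibits $\mot{u}$ again. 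Your approach replaces both steps by a single structural fact: the stem graph $\Gamma$ is balanced (in- and out-degree $a$), so the set $R$ reachable from $\mot s$ is strongly connected and all factor languages $F(\mot w)$ for $\mot w\in R$ coincide. This is conceptually cleaner, yields the stronger statement that $R$ is strongly connected, and isolates exactly where invertibility enters (surjectivity of $\rho_x$ in your proof that $\sigma\colon\aut D\to\aut C$ is onto); the paper's proof is more hands-on, needs no auxiliary graph, but hides a comparable use of invertibility in the choice of the exponent~$\alpha$. Both rely on Remark~\ref{rem:new-liana} to re-root lianas; your argument additionally uses the label-$1$ condition explicitly through $\#\aut D=\#\aut C$.
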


The graphical sense of this theorem is that if you are walking on a
\(\jungle\)-word and you have already seen some factor, you can find
eventually this same factor.

\begin{proof}
First, remember that if \(\mot{u}\) is a stem (\ie \(\mot{u}\) is a factor
in \(S\) of length~\(n\)), what can follow \(\mot{u}\) (in~\(S\)) does not depend
either of the choice of the liana (as long as you are in a liana
covering up the same jungle tree), or of the location of \(\mot{u}\) in
this liana. Hence it is sufficient to prove
Theorem~\ref{thm:walk-on-liana} for \(|\mot{u}|=n\).

\smallskip

We start by proving that
there is at least one stem \(\mot{u}_0\) with
Property~\eqref{prop:W}. To obtain this word, we travel along
\(S=\mot{s}L_{\mot{s}}\) in the following way, 
starting with \(\mot{u}_0=\mot{s}\):
\begin{itemize}
\item if \(\mot{u}_0\) answers to the question, our journey is over;
\item otherwise, at the end of \(\mot{u}_0\) we follow some finite path
  such that \(\mot{u}_0\) does not exist anymore after this path; then we
  replace \(\mot{u}_0\) by the next word of length~\(n\) in~\(S\), and
  back to the previous step.
\end{itemize}
Since \(S\) is infinite but has a finite arity and a finite number of factors of
length~\(n\), the previous algorithm ends returning a stem
\(\mot{u}_0\) satisfying Property~\eqref{prop:W}. By
Remark~\ref{rem:new-liana}, the
jungle tree \(\jungle\) is covered up by a liana of the form
\(\mot{u}_0L_{\mot{u}_0}\).

\medskip

The extension of Property~\eqref{prop:W} to other words is illustrated
by Figure~\ref{fig:extW}. 
Let~\(\mot{u}\mot{v}\) be a factor in \(S\), with
\(|\mot{u}|=n\). In particular \(\mot{u}\) is a stem, hence
\(\mot{u}_0\) and \(\mot{u}\) are states of
the same connected component, and there exists a path in this
component from \(\mot{u}_0\) to \(\mot{u}\), say by the action of some
\(\mot{a}\in \Sigma^*\). The automaton being reversible, \(\mot{v}\) is the image of some
\(\mot{v}_0\in Q^{|\mot{v}|}\) by  \(\mot{b}=\delta_{\mot{u}_0}(\mot{a})\) (see
Figure~\ref{fig:extW}).

Now, we know that on the left part of Figure~\ref{fig:extW}  we can find eventually
\(\mot{u}_0\), after some \(\mot{w}_0\) (because \(\mot{u}_0\) has
Property~\eqref{prop:W}). 
And
by the invertibility of the automaton, there exists some power
\(\alpha\) of \(\mot{u}_0\mot{v}_0\mot{w}_0\) which stabilizes
\(\mot{a}\):

\begin{figure}[H]
\centering
\[\begin{array}{ccc}
 &  \mot{a} & \\
\mot{u}_0 & \lacroix & \mot{u}\\
 & \mot{b} &\\
\mot{v}_0 & \lacroix & \mot{v}\\
 & \dots &\\
\mot{w}_0 & \lacroix & \dots\\
 & \dots &\\
(\mot{u}_0\mot{v}_0\mot{w}_0)^{\alpha-1}& \lacroix & \dots\\
 &  \mot{a} & \\
\mot{u}_0 & \lacroix & \mot{u}\\
\end{array}\]
\caption{Extension of Property~\eqref{prop:W} from \(\mot{u}_0\) to \(\mot{u}\).}\label{fig:extW}

\end{figure}

Hence \(\mot{u}\) can be seen again eventually and it has Property~\eqref{prop:W}.
\end{proof}

\begin{myremark}
Note that, from Theorem~\ref{thm:walk-on-liana}, if $\mot{u}$, $\mot{v}$ are two stems such that $\mot{v}$ is a factor of some word  in $\Li{\mot{u}}$, then  $\mot{u}$ is a factor of some word in  $ \Li{\mot{v}}$.
\end{myremark}

\subsection{An equivalence on stems}
Remember that \(\aut{A}=(Q,\Sigma, \delta, \rho)\) is a connected bireversible Mealy
automaton such that \(\ota\) has no active self-liftable branch (and as a
consequence all the elements of the semigroup \(\presm{\aut{A}}\) have
finite order).
Let \(\jungle\) be a jungle tree of \(\ota\) with trunk of
length $n$. All the stems considered from now on are stems of \(\jungle\).

In this subsection we prove several properties for the stems
of the jungle tree \(\jungle\). Stems are used then in Section~\ref{sec:main} to
build a \(\jungle\)-word inducing the same action than some given word.

\medskip

Let us first introduce an equivalence relation on the set of stems.

\begin{definition}
Let $\mot{u}$, $\mot{v}$ be two stems. We say that $\mot{u}$ is
\emph{equivalent} to $\mot{v}$, denoted by $\mot{u}\sim \mot{v}$,
whenever there exists $\mot{s} \in Q^*$ such that $\mot{usv}$ is a
\(\jungle\)-word  and $\rho_{\mot{us}}$ acts like the identity on
$\Sigma^*$.
\end{definition}

\begin{lemma}
The relation $\sim$ is an equivalence relation on stems.
\end{lemma}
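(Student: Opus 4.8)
The plan is to verify the three defining properties of an equivalence relation: reflexivity, symmetry, and transitivity, working directly from the definition that $\mot{u}\sim\mot{v}$ means there is some $\mot{s}\in Q^*$ with $\mot{usv}$ a $\jungle$-word and $\rho_{\mot{us}}$ acting like the identity on $\Sigma^*$.

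For \emph{reflexivity}, I would take $\mot{s}=\varepsilon$ so that $\mot{usv}=\mot{uu}$; I need $\mot{uu}$ to be a $\jungle$-word and $\rho_{\mot{u}}$ to act like the identity. The second condition is exactly where Proposition~\ref{prop-const-order} enters: since $\mot{u}$ is a stem, it is a cyclic $\jungle$-word, so $\rho_{\mot{u}}$ has finite order, say $m$. Taking $\mot{s}=\mot{u}^{m-1}$ instead of $\varepsilon$ gives $\rho_{\mot{us}}=\rho_{\mot{u}}^{m}=\mathrm{id}$, and $\mot{usv}=\mot{u}^{m+1}$ is a $\jungle$-word by the cyclicity of $\mot{u}$. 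Thus $\mot{u}\sim\mot{u}$.

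For \emph{symmetry}, suppose $\mot{u}\sim\mot{v}$ via $\mot{s}$, so $\mot{usv}$ is a $\jungle$-word and $\rho_{\mot{us}}=\mathrm{id}$. I want a witness $\mot{s}'$ showing $\mot{vsu}$-type word is a $\jungle$-word. The natural candidate is to ``wrap around'': since $\mot{usv}$ is a $\jungle$-word and (as noted after Proposition~\ref{prop-const-order}) any factor of length $\geq n$ can be extended periodically, I would consider the cyclic structure and build $\mot{v}\mot{t}\mot{u}$ as a $\jungle$-word with $\rho_{\mot{vt}}=\mathrm{id}$. Concretely, using Remark~\ref{rem:canfollow}, what follows the stem $\mot{v}$ does not depend on what preceded it, so from $\mot{usv}$ being a $\jungle$-word I can continue past $\mot{v}$ by $\mot{s}\mot{u}$ (since $\mot{svu}$ starts with the stem that legitimately followed before) to form a longer $\jungle$-word containing $\mot{v}\mot{s}\mot{u}$ as a factor. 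Then the witness is governed by $\rho_{\mot{vs}}$; since $\rho_{\mot{us}}=\mathrm{id}$ and $\mot{u},\mot{v}$ are in the same connected component, I would adjust by a suitable power to kill $\rho_{\mot{vs}}$, again invoking finite order from Proposition~\ref{prop-const-order}.

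For \emph{transitivity}, given $\mot{u}\sim\mot{v}$ via $\mot{s}$ and $\mot{v}\sim\mot{w}$ via $\mot{t}$, the concatenation $\mot{u}\mot{s}\mot{v}\mot{t}\mot{w}$ should be the relevant $\jungle$-word: Remark~\ref{rem:canfollow} lets me glue $\mot{usv}$ and $\mot{vtw}$ along the common stem $\mot{v}$ since $|\mot{v}|=n$, yielding that $\mot{usvtw}$ is a $\jungle$-word, hence so is $\mot{u}(\mot{svt})\mot{w}$ with connecting word $\mot{s}'=\mot{svt}$. For the action, $\rho_{\mot{us}'}=\rho_{\mot{usvt}}=\rho_{\mot{vt}}\circ\rho_{\mot{usv}}$, and I would argue $\rho_{\mot{usv}}=\rho_{\mot{v}}\circ\rho_{\mot{us}}=\rho_{\mot{v}}$ (using $\rho_{\mot{us}}=\mathrm{id}$), so the product reduces to something whose triviality follows from $\rho_{\mot{vt}}=\mathrm{id}$ together with $\rho_{\mot{v}}$ having finite order; absorbing the leftover $\rho_{\mot{v}}$ factor by a periodic extension as in the reflexive case completes the argument. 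The main obstacle I anticipate is symmetry: the definition is visibly asymmetric in $\mot{u}$ and $\mot{v}$ (the condition constrains $\rho_{\mot{us}}$ but says nothing directly about $\rho_{\mot{v}}$), so the real work is showing that the relation is nonetheless symmetric, which is where I expect to lean most heavily on the finite-order property of stems and on Theorem~\ref{thm:ubiquity}, to guarantee that the reverse $\jungle$-word actually exists.
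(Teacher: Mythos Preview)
Your transitivity argument is essentially the paper's, but you miscompute: $\rho_{\mot{usvt}}=\rho_{\mot{vt}}\circ\rho_{\mot{us}}$, not $\rho_{\mot{vt}}\circ\rho_{\mot{usv}}$ (you have double-counted $\mot{v}$). With the correct split the composition is $\mathrm{id}\circ\mathrm{id}$ and there is no ``leftover $\rho_{\mot{v}}$'' to absorb; transitivity is the easy case.

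The real gap is in reflexivity. You assert that a stem $\mot{u}$ is a cyclic $\jungle$-word, i.e.\ that every power $\mot{u}^k$ is a $\jungle$-word. This is not justified: for $\mot{u}^2$ to lie in $\jungle$ you would need every cyclic rotation $u_{i+1}\cdots u_n u_1\cdots u_i$ to be a stem (so that the edges past level $n$ remain liftable to the last edge of the trunk), and nothing in the definition of a jungle tree guarantees that. The paper's own remark that cyclic $\jungle$-words exist only because sufficiently long $\jungle$-words must contain one as a factor already signals that an arbitrary stem need not be cyclic. The paper instead invokes Theorem~\ref{thm:ubiquity} (Ubiquity) to produce some $\mot{s}$ with $\mot{usu}$ a $\jungle$-word, observes that then all powers of $\mot{us}$ are $\jungle$-words via Remark~\ref{rem:canfollow}, and uses the finite order $\alpha$ of $\mot{us}$ to get the witness $(\mot{su})^{\alpha-1}\mot{s}$.

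Your symmetry sketch is too vague to be a proof. Writing ``continue past $\mot{v}$ by $\mot{su}$'' presumes that $\mot{vsu}$ is a $\jungle$-word, which Remark~\ref{rem:canfollow} does not give you: that remark only says continuation is determined by the last $n$ letters, not that any particular continuation is available. The paper's route is cleaner: once reflexivity is established (via Ubiquity), apply its mechanism starting from the $\jungle$-word $\mot{usv}$ to obtain $\mot{t}$ with $\mot{usvtu}$ a $\jungle$-word and $\rho_{\mot{usvt}}=\mathrm{id}$; then $\rho_{\mot{vt}}=\rho_{\mot{usvt}}\circ\rho_{\mot{us}}^{-1}=\mathrm{id}$ and $\mot{vtu}$ is a $\jungle$-word as a factor of $\mot{usvtu}$. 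You correctly anticipated that symmetry is where Ubiquity is needed, but you also need it for reflexivity, and in fact reflexivity is the engine that drives symmetry.
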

\begin{proof}
Let \(\mot{u}\), \(\mot{v}\), and \(\mot{w}\) be three stems.
\begin{description}
\item[transitivity]  Suppose that $\mot{u}\sim \mot{v}$  and \(\mot{v} \sim
  \mot{w}\): there exists \(\mot{s},\mot{t}\in Q^*\) such that
  \(\mot{usv}\) and \(\mot{vtw}\) are \(\jungle\)-words, and \(\rho_{\mot{us}}\)
  and \(\rho_{\mot{vt}}\) act like the identity. As \(\mot{v}\) is a stem, we
  obtain by Remark~\ref{rem:canfollow} that \(\mot{usvtw}\) is a \(\jungle\)-word, and
  \(\rho_{\mot{usvt}}\) acts like the identity, so \(\mot{u}\sim\mot{w}\).
\item[reflexivity] From Theorem~\ref{thm:ubiquity}, there exists
  \(\mot{s}\in Q^*\) such that \(\mot{usu}\) is a \(\jungle\)-word (in
  fact from Theorem~\ref{thm:ubiquity} one can even chose the beginning
  of \(\mot{s}\), as long as we keep a \(\jungle\)-word). As
  \(\mot{u}\) is a stem, \(\mot{usus}\) is also a \(\jungle\)-word,
  and so are all the powers of \(\mot{us}\). Now, by hypothesis and Theorem~\ref{thm-char},
  \(\mot{us}\) is of finite order, say \(\alpha\):
  \(\mot{u}(\mot{su})^{\alpha-1}\mot{su}\) is a \(\jungle\)-word and
  \(\rho_{\mot{u}(\mot{su})^{\alpha-1}\mot{s}} = \rho_{(\mot{us})^{\alpha}}\) acts
  like the identity.
\item[symmetry] Suppose that \(\mot{u}\sim \mot{v}\): there exists
  \(\mot{s}\in Q^*\) such that \(\mot{usv}\) is a \(\jungle\)-word and
  \(\rho_{\mot{us}}\) acts like the identity. From the reflexivity proof,
  there exists \(\mot{t}\in Q^*\) such that \(\mot{usvtu}\) is a
  \(\jungle\)-word and \(\rho_{\mot{usvt}}\) acts like the
  identity. Hence
  \(\mot{vtu}\) is a \(\jungle\)-word and \(\rho_{\mot{vt}}\) acts like the
  identity, which proves the symmetry.
\end{description}

\end{proof}

Note that from reflexivity of~\(\sim\) and Theorem~\ref{thm:ubiquity},
if \(\mot{u}\) and \(\mot{v}\) are equivalent stems and \(\mot{uw}\)
is a \(\jungle\)-word for some \(\mot{w}\in Q^*\), then there exists
\(\mot{s}\in Q^*\) such that \(\mot{uwsv}\) is a \(\jungle\)-word and
\(\rho_{\mot{uws}}\) acts like the identity. So not only \(\mot{v}\) can be
reached from \(\mot{u}\) by producing first the identity, but even if you
walk in \(\jungle\) after reading \(\mot{u}\), you can still reach
\(\mot{v}\) and produce first the identity.

\bigskip

We can now consider the equivalence classes induced by~\(\sim\). The
aim of this subsection is to prove that if \(\aut{A}\) has a prime
size, then for a given state \(q\) there is in each
\(\sim\)-class a stem with prefix~\(q\) (Theorem~\ref{thm:Qprefix}).

\begin{proposition}\label{prop:samesize}
All the equivalence classes of $\sim$ have the same size.
\end{proposition}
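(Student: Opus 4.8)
The plan is to show that the equivalence classes of $\sim$ all have the same cardinality by exhibiting, for any two classes, a bijection between them. The natural candidate is a map induced by the automaton's own dynamics: since all stems live in one and the same connected component at the end of the trunk of $\jungle$ (as noted right after Proposition~\ref{prop-const-order}), the group $\pres{\dual{\aut{A}}}$ acts transitively on the stems. So for any two stems $\mot{u}$ and $\mot{u}'$ there is some $\mot{a}\in\Sigma^*$ with $\delta_{\mot{a}}(\mot{u})=\mot{u}'$, and the map $\mot{w}\mapsto\delta_{\mot{a}}(\mot{w})$ is a bijection on the set of all stems. The key point I would establish is that this bijection is \emph{compatible with $\sim$}: if $\mot{w}\sim\mot{w}'$ then $\delta_{\mot{a}}(\mot{w})\sim\delta_{\mot{a}}(\mot{w}')$, and conversely. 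Compatibility in both directions would show that $\delta_{\mot{a}}$ maps whole $\sim$-classes bijectively onto $\sim$-classes, and in particular that any class is mapped bijectively onto the class of $\delta_{\mot{a}}(\mot{u})$.

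First I would fix two stems $\mot{u}$ and $\mot{u}'$ together with $\mot{a}\in\Sigma^*$ realizing $\delta_{\mot{a}}(\mot{u})=\mot{u}'$, and I would verify that $\delta_{\mot{a}}$ restricts to a permutation of the set of stems (this is immediate from reversibility, since $\delta_{\mot{a}}$ permutes the connected component containing all stems). Then I would take $\mot{w}\sim\mot{w}'$, so there is $\mot{s}\in Q^*$ with $\mot{wsw}'$ a $\jungle$-word and $\rho_{\mot{ws}}$ acting as the identity. The goal is to produce a witness $\mot{s}'$ showing $\delta_{\mot{a}}(\mot{w})\sim\delta_{\mot{a}}(\mot{w}')$. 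The idea is to apply $\delta_{\mot{a}}$ to the whole $\jungle$-word $\mot{wsw}'$: because $\jungle$ is $1$-self-liftable and all stems sit in a single connected component, the image $\delta_{\mot{a}}(\mot{wsw}')$ should again be a $\jungle$-word starting at $\delta_{\mot{a}}(\mot{w})$ and ending at $\delta_{\mot{a}}(\mot{w}')$, with a middle part $\mot{s}'=\delta_{\rho_{\mot{w}}(\mot{a})}(\mot{s})$ or similar. The condition that $\rho_{\mot{ws}}$ is the identity would transfer to $\rho$ of the corresponding prefix of the image, using the standard conjugation relation between $\rho$ and the action of $\delta_{\mot{a}}$ on words.

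\textbf{The main obstacle} I anticipate is the transfer of the ``acts like the identity'' condition through the map $\delta_{\mot{a}}$. The relation $\mot{u}\sim\mot{v}$ is not purely combinatorial on the orbit tree — it requires that a specific induced map $\rho_{\mot{us}}$ be trivial — and it is not automatic that applying $\delta_{\mot{a}}$ (which acts on the $Q$-side) preserves triviality on the $\Sigma$-side. I would handle this with the dual interplay between $\rho$ and $\delta$: using that $\delta_{\mot{a}}$ permutes words and that $\rho_{\mot{w}}$ maps $\mot{a}$ to some $\mot{a}'=\rho_{\mot{w}}(\mot{a})$, one obtains a commutation of the form $\rho_{\delta_{\mot{a}}(\mot{w})}\circ(\text{something}) = (\text{something})\circ\rho_{\mot{w}}$, from which triviality of $\rho_{\mot{ws}}$ yields triviality of the relevant image map. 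Care is needed because the stabilizing element may require passing to a power (as in the reflexivity argument, where one invokes finite order via Theorem~\ref{thm-char}); I would likely need to compose with a suitable power of a cyclic $\jungle$-word stabilizing $\mot{a}$, exactly as in the proof of Theorem~\ref{thm:ubiquity}. Once the forward compatibility is shown, the converse follows by the same argument applied to $\delta_{\mot{a}}^{-1}$, giving that $\delta_{\mot{a}}$ is a class-preserving bijection and hence that all classes share one cardinality.
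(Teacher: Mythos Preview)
Your approach is essentially the paper's: pick \(\mot{a}\in\Sigma^*\) sending one stem to another and push whole witness words \(\mot{u}_0\mot{s}_i\mot{u}_i\) through \(\delta_{\mot{a}}\), obtaining witnesses \(\mot{v}_0\mot{t}_i\mot{v}_i\) for \(\mot{v}_i\sim\mot{v}_0\); since \(\rho_{\mot{u}_0\mot{s}_i}=\mathrm{id}\) forces \(\mot{v}_i=\delta_{\mot{a}}(\mot{u}_i)\), the map \(\delta_{\mot{a}}\) injects one class into the other.

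The obstacle you flag is not one. The transfer of the identity condition is immediate from the basic Mealy recursion
\[
\rho_{\mot{w}}(\mot{a}\mot{b})=\rho_{\mot{w}}(\mot{a})\,\rho_{\delta_{\mot{a}}(\mot{w})}(\mot{b}),
\]
so \(\rho_{\mot{w}}=\mathrm{id}\) gives \(\rho_{\delta_{\mot{a}}(\mot{w})}=\mathrm{id}\) directly; no conjugation trick or passage to a power is needed (the paper dispatches this in one clause). What does deserve a line, and the paper supplies it, is why \(\delta_{\mot{a}}(\mot{wsw}')\) is again a \(\jungle\)-word: each length-\(n\) factor of the image is the image of a stem under some \(\delta_{\mot{b}}\), hence lies in the same connected component and is itself a stem, so the image word stays in \(\jungle\).
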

\begin{proof}
Let \(\mot{u}_0\) and \(\mot{v}_0\) be two 
stems of \(\jungle\): they are states of the same connected component and
so there exists \(\mot{a}\in\Sigma^*\) such that
\(\delta_{\mot{a}}(\mot{u}_0)=\mot{v}_0\). Denote by
\(\{\mot{u}_0,\dots,\mot{u}_k\}\) the \(\sim\)-class of
\(\mot{u}_0\): for any \(i\), \(1\leq i\leq k\), there exists
\(\mot{s}_i\in Q^*\) such that \(\mot{u}_0\mot{s}_i\mot{u}_i\) is a
\(\jungle\)-word and \(\rho_{\mot{u}_0\mot{s}_i}\) acts like the
identity. Define the
words \(\mot{v}_i\in Q^{|\mot{u}_i|}\) and \(\mot{t}_i\in Q^{|\mot{s}_i|}\) in the following way:
\(\delta_{\mot{a}}(\mot{u}_0\mot{s}_i\mot{u}_i)=\mot{v}_0\mot{t}_i\mot{v}_i\). Note that $\mot{v}_0\mot{t}_i\mot{v}_i$ is also a \(\jungle\)-word: any factor of size $n$ of  $\mot{v}_0\mot{t}_i\mot{v}_i$ is the image of a stem (the corresponding factor in $\mot{u}_0\mot{s}_i\mot{u}_i$) and therefore belongs to the connected component of $\mot{u}_0$ and $\mot{v}_0$, hence every prefix of $\mot{v}_0\mot{t}_i\mot{v}_i$ is on a 1-self-liftable path.  Now \(\rho_{\mot{v}_0\mot{t}_i}\) acts like the
identity by the reversibility of~\(\aut{A}\), so \(\mot{v}_i\) is
$\sim$-equivalent to \(\mot{v}_0\). Furthermore, as \(\rho_{\mot{u}_0\mot{s}_i}\)
acts like the identity, we know that
\(\mot{v}_i=\delta_{\mot{a}}(\mot{u}_i)\), and all the \(\mot{v}_i\)
are different.
\end{proof}

\subsection{Combinatorial properties of stems}

We now states several combinatorial properties for stems. Let
$k_1,k_2,\hdots ,k_n$ be the labels, from root to $\bot(\mot{e})$, of
the jungle tree $\jungle=\jungle(\mot{e})$. Recall that, since $\auta$
is connected, $k_1=p$ and by construction of the jungle tree $k_n \geq
2$.

 First if we consider  no restriction then we can directly count stems by looking to the labels of the trunk:
 
\begin{lemma}\label{lem:sizestem}
The number of stem with a given prefix depends only on length $i$ of the prefix and is $k_{i+1}k_{i+2}\hdots k_n$.
\end{lemma}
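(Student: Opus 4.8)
The plan is to express each stem-count as a product of trunk labels by peeling off one letter at a time and exploiting the structure of the trunk of $\jungle$ described in Section~\ref{sec-cc}. Write $\aut{C}_0,\aut{C}_1,\dots,\aut{C}_n=\bot(\mot{e})$ for the successive connected components met along the trunk, so that $\aut{C}_j$ is a connected component of $\aut{A}^j$ and, by definition of the labelling of the orbit tree, $k_j=\#\aut{C}_j/\#\aut{C}_{j-1}$. Since $\#\aut{C}_0=1$, the number of stems, \ie of states of $\aut{C}_n$, is $k_1k_2\cdots k_n$, which is the case $i=0$ of the statement. I will denote by $N_i(\mot{p})$ the number of stems admitting a fixed $\mot{p}\in Q^i$ as a prefix, and show by downward induction on $i$ that $N_i(\mot{p})=k_{i+1}\cdots k_n$ for every prefix $\mot{p}$ that actually occurs (necessarily a state of $\aut{C}_i$).

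The heart of the argument is the following uniformity claim: every state of $\aut{C}_i$ is the prefix of exactly $k_{i+1}$ states of $\aut{C}_{i+1}$. To prove it I would use the dual action. Recall that the connected components of $\aut{A}^m$ are the orbits of $\pres{\dual{\aut{A}}}$ acting on $Q^m$, and that from $\delta_{\mot{s}}(\mot{u}x)=\delta_{\mot{s}}(\mot{u})\,\delta_{\rho_{\mot{u}}(\mot{s})}(x)$ the projection $\pi\colon Q^{i+1}\to Q^i$ forgetting the last letter is equivariant for this action. Restricted to the orbit $\aut{C}_{i+1}$, the map $\pi$ therefore surjects onto the orbit $\aut{C}_i$; as an equivariant surjection between two transitive $\pres{\dual{\aut{A}}}$-sets, all its fibres have the same cardinality, namely $\#\aut{C}_{i+1}/\#\aut{C}_i=k_{i+1}$. (This refines the ``copies of $\aut{C}$'' picture of Section~\ref{sec-cc}, where only existence and distinctness of the extensions were recorded.) In particular $\aut{C}_i$ is exactly the set of length-$i$ prefixes of stems, the unique trunk child $\aut{C}_{i+1}$ the set of their length-$(i+1)$ prefixes, and each state of $\aut{C}_i$ extends to $k_{i+1}$ states of $\aut{C}_{i+1}$, each of which is again a prefix of some stem.

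The induction is then immediate. For $i=n$ a stem is its own prefix, so $N_n(\mot{p})=1$, the empty product. For $i<n$, a stem with prefix $\mot{p}\in\aut{C}_i$ has a length-$(i+1)$ prefix of the form $\mot{p}y$ with $\mot{p}y\in\aut{C}_{i+1}$, and conversely each such extension is the prefix of a stem; grouping stems by this letter gives
\[
N_i(\mot{p})=\sum_{\mot{p}y\in\aut{C}_{i+1}}N_{i+1}(\mot{p}y)=k_{i+1}\cdot k_{i+2}\cdots k_n,
\]
using the uniformity claim for the number of admissible $y$ and the induction hypothesis for each $N_{i+1}(\mot{p}y)$. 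This both proves the value $k_{i+1}\cdots k_n$ and shows it is independent of the chosen prefix, as claimed. I expect the only delicate point to be the uniformity of the fibre sizes: it is exactly here that reversibility is needed, and it must be argued via the equivariance of $\pi$ rather than read off directly from the divisibility statement of Section~\ref{sec-cc}, which by itself only pins down the average fibre size.
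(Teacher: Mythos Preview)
Your argument is correct. The paper itself does not give a proof of this lemma: it is stated as an immediate consequence of the ``copies of~$\aut{C}$'' discussion in Section~\ref{sec-cc}, with the sentence ``we can directly count stems by looking to the labels of the trunk'' serving as the entire justification. Your proof is exactly the way to make that heuristic rigorous---the equivariance of the prefix projection $\pi$ under the $\pres{\dual{\aut{A}}}$-action is precisely what turns the divisibility observation of Section~\ref{sec-cc} into the uniform fibre count $k_{i+1}$, and the downward induction along the trunk is the natural unwinding. So your approach is the same as the paper's intended one, just spelled out in full.
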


We are now interested in two somehow dual questions. Fix a
\(\jungle\)-word \(\mot{u}\) of length less than~\(n\): (i) if
\(\mot{u}\) is the prefix of a stem in some \(\sim\)-class
\(\equivclass\), in how many way can \(\mot{u}\) be completed in
\(\equivclass\) (Proposition~\ref{prop:numsufeq})? (ii) in how many
\(\sim\)-classes is \(\mot{u}\) the prefix of a stem
(Corollary~\ref{cor:fixprefix})?

\begin{proposition}\label{prop:numsufeq}

Fix some \(\jungle\)-word \(\mot{u}\) of length less than \(n\), a
\(\sim\)-class \(\equivclass\) of stems including an element with
prefix \(\mot{u}\), and
some integer \(k\) such that \(|\mot{u}|+k\leq n\).
The number of \(\mot{v}\in Q^k\) such that \(\mot{uv}\) is a prefix of a stem
of \(\equivclass\) depends only on \(|\mot{u}|\) and \(k\).
\end{proposition}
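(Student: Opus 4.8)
The plan is to prove the slightly stronger statement that the count is the \emph{same} for every admissible pair, by transporting one configuration onto another with a single element of the dual action. Call a pair $(\mot{u},\equivclass)$ \emph{admissible} if $\mot{u}$ is a $\jungle$-word of length $i<n$ and $\equivclass$ is a $\sim$-class containing a stem with prefix $\mot{u}$. Recall that the stems form a single connected component $\bot(\mot{e})$, so the dual group $\pres{\dual{\auta}}$ acts transitively on them. Fix $i$ and $k$ and take two admissible pairs $(\mot{u},\equivclass)$ and $(\mot{u}',\equivclass')$ with $|\mot{u}|=|\mot{u}'|=i$. I would choose a stem $\mot{w}\in\equivclass$ with prefix $\mot{u}$ and a stem $\mot{w}'\in\equivclass'$ with prefix $\mot{u}'$ (they exist by admissibility), and then pick $\mot{a}\in\Sigma^*$ with $\delta_{\mot{a}}(\mot{w})=\mot{w}'$, which is possible by transitivity.

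The first key step is to invoke Proposition~\ref{prop:samesize}: its proof shows precisely that this very $\delta_{\mot{a}}$ restricts to a bijection from $\equivclass$, the $\sim$-class of $\mot{w}$, onto $\equivclass'$, the $\sim$-class of $\mot{w}'$, carrying stems to stems while preserving both the $\jungle$-word property and the identity-producing condition defining $\sim$. So the transport is not just a single-stem matching but a full class bijection.

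The second step is to observe that this transport is prefix-compatible and to read off the induced map on completions. Since $\delta_{\mot{a}}$ is prefix-preserving and $\mot{u}$ is the length-$i$ prefix of $\mot{w}$, the length-$i$ prefix of $\mot{w}'=\delta_{\mot{a}}(\mot{w})$ is $\delta_{\mot{a}}(\mot{u})=\mot{u}'$; and the relation $\delta_{\mot{s}}(\mot{u}x)=\mot{v}\,\delta_{\rho_{\mot{u}}(\mot{s})}(x)$ recalled in Section~\ref{sec-cc} gives, for every $\mot{v}\in Q^k$,
\[
\delta_{\mot{a}}(\mot{u}\mot{v})=\mot{u}'\,\delta_{\mot{b}}(\mot{v}),\qquad \mot{b}=\rho_{\mot{u}}(\mot{a}),
\]
so that $\delta_{\mot{b}}\colon Q^k\to Q^k$ is a bijection (reversibility) not depending on the chosen stem. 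Consequently $\mot{u}\mot{v}$ is the prefix of a stem of $\equivclass$ if and only if the image of that stem under the class bijection $\delta_{\mot{a}}$ is a stem of $\equivclass'$ with prefix $\mot{u}'\delta_{\mot{b}}(\mot{v})$. Hence $\delta_{\mot{b}}$ sends the set of admissible completions of $\mot{u}$ in $\equivclass$ into that of $\mot{u}'$ in $\equivclass'$; running the same argument with $\delta_{\mot{a}}^{-1}$, available because $\auta$ is bireversible, yields the reverse inclusion, so $\delta_{\mot{b}}$ is a bijection between the two completion sets and the two counts coincide. As the two admissible pairs were arbitrary with $|\mot{u}|=|\mot{u}'|=i$, the count depends only on $i$ and $k$.

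The only delicate point is the first step, and it is exactly the content of Proposition~\ref{prop:samesize}: one must be sure that $\delta_{\mot{a}}$ carries the \emph{whole} class $\equivclass$ into $\equivclass'$, i.e.\ that the image of a $\jungle$-word witnessing $\mot{z}\sim\mot{w}$ is again a $\jungle$-word whose prefix still produces the identity. I would therefore present this result as an essentially immediate consequence of Proposition~\ref{prop:samesize}, the genuinely new ingredient being only the prefix bookkeeping $\delta_{\mot{a}}(\mot{u}\mot{v})=\mot{u}'\delta_{\mot{b}}(\mot{v})$, which upgrades the class bijection to a bijection between completion sets.
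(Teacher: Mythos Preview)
Your proposal is correct and is precisely the argument the paper has in mind: the paper's proof is the single line ``By the same argument as in the proof of Proposition~\ref{prop:samesize}'', and what you have written is exactly that argument spelled out, with the additional prefix bookkeeping $\delta_{\mot a}(\mot{u}\mot{v})=\mot{u}'\,\delta_{\mot b}(\mot{v})$ made explicit. One tiny remark: the availability of $\delta_{\mot a}^{-1}$ only uses reversibility (each $\delta_i$ is a permutation of $Q$), not full bireversibility, so you could cite the weaker hypothesis there.
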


\begin{proof}
By the same argument as in the proof of Proposition~\ref{prop:samesize}.
\end{proof}

Let \(\mot{u}\in Q^*\) be a prefix of a stem in some \(\sim\)-class
\(\equivclass\). Denote by \(\Sufeq(|\mot{u}|+1)\) the cardinal of the set
\(\{q\in Q\mid \mot{u}q\text{ is a prefix of some stem in
}\equivclass\}\) (from Lemma~\ref{prop:numsufeq} it depends only of \(|\mot{u}|\)
and so it is correctly defined).

In order to obtain a minimal bound on the size of a $\sim$-class,
we introduce another equivalence relation between stems which is finer
than~\(\sim\), as proved in Lemma~\ref{lem:sibleq}: 
\begin{definition}
Let $\mot{u},\mot{v}$ be two stems. Define the relation $\mot{u}\wedge_0
\mot{v}$ whenever there exists a stem $\mot{s}$ such that both
$\mot{su}$ and $\mot{sv}$ are \(\jungle\)-words. The equivalence relation $\wedge$
is defined as the transitive closure of $\wedge_0$.

\end{definition}

\begin{lemma}\label{lem:sibleq}
The relation \(\wedge\) is finer than the relation \(\sim\): $\mot{u}\wedge \mot{v} \Rightarrow \mot{u}\sim\mot{v}$.
\end{lemma}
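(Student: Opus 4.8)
The plan is to prove the implication first for the generating relation \(\wedge_0\) and then lift it to its transitive closure for free: since \(\sim\) is already known to be an equivalence relation, in particular transitive, any chain \(\mot{u}=\mot{x}_0\wedge_0\mot{x}_1\wedge_0\cdots\wedge_0\mot{x}_m=\mot{v}\) witnessing \(\mot{u}\wedge\mot{v}\) collapses to \(\mot{u}\sim\mot{v}\) as soon as each single \(\wedge_0\)-step does. So I reduce the lemma to the base case: \emph{if \(\mot{s}\) is a stem such that both \(\mot{su}\) and \(\mot{sv}\) are \(\jungle\)-words, then \(\mot{u}\sim\mot{v}\).}

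First I would build a single \(\jungle\)-word carrying both \(\mot{u}\) and \(\mot{v}\) together with a trivial action on a suitable prefix. Starting from reflexivity \(\mot{s}\sim\mot{s}\), I apply the observation recorded right after the proof that \(\sim\) is an equivalence relation to the equivalent pair \((\mot{s},\mot{s})\) and the \(\jungle\)-word \(\mot{su}\) (with \(\mot{u}\) playing the role of the suffix appended to the first \(\mot{s}\)): this yields some \(\mot{r}\in Q^*\) such that \(\mot{surs}\) is a \(\jungle\)-word and \(\rho_{\mot{sur}}\) acts like the identity. The final occurrence of \(\mot{s}\) is a stem of length \(n\) and \(\mot{sv}\) is a \(\jungle\)-word, so Remark~\ref{rem:canfollow} lets me append \(\mot{v}\): the word \(\mot{sursv}\) is again a \(\jungle\)-word.

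The crux is then a short computation on induced actions. Since factors of \(\jungle\)-words are \(\jungle\)-words, the factor \(\mot{ursv}\) of \(\mot{sursv}\) is itself a \(\jungle\)-word, and it has exactly the shape \(\mot{u}\,(\mot{rs})\,\mot{v}\) demanded by the definition of \(\sim\). It remains to check that \(\rho_{\mot{urs}}\) is trivial. Writing compositions as \(\rho_{\mot{xy}}=\rho_{\mot{y}}\circ\rho_{\mot{x}}\), the hypothesis \(\rho_{\mot{sur}}=\mathrm{id}\) reads \(\rho_{\mot{r}}\circ\rho_{\mot{u}}\circ\rho_{\mot{s}}=\mathrm{id}\), whence \(\rho_{\mot{r}}\circ\rho_{\mot{u}}=\rho_{\mot{s}}^{-1}\); therefore \(\rho_{\mot{urs}}=\rho_{\mot{s}}\circ(\rho_{\mot{r}}\circ\rho_{\mot{u}})=\rho_{\mot{s}}\circ\rho_{\mot{s}}^{-1}=\mathrm{id}\). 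This gives \(\mot{u}\sim\mot{v}\) with witness \(\mot{rs}\), establishing the base case and hence the lemma.

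I expect the main obstacle to be precisely this last bookkeeping: the natural construction produces a \(\jungle\)-word whose leading stem \(\mot{s}\) carries the identity, i.e.\ a word beginning with \(\mot{s}\) rather than with \(\mot{u}\), which does not directly match the definition of \(\sim\). The resolving trick is to read the cyclic shift \(\mot{urs}\) of \(\mot{sur}\): triviality of the induced map of \(\mot{sur}\) forces triviality of the induced map of \(\mot{urs}\), turning what looked like an \(\mot{s}\)-based relation into the desired \(\mot{u}\)-to-\(\mot{v}\) relation. Everything else --- reflexivity of \(\sim\), the ability to re-encounter \(\mot{s}\), and the freedom to append after a stem --- is supplied by Theorem~\ref{thm:ubiquity} and Remark~\ref{rem:canfollow}.
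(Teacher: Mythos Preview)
Your argument is correct. The reduction to \(\wedge_0\) via transitivity of \(\sim\) is identical to the paper's, and every step thereafter is sound: the post-equivalence observation applied to \((\mot{s},\mot{s})\) with the \(\jungle\)-word \(\mot{su}\) legitimately produces \(\mot{r}\) with \(\rho_{\mot{sur}}=\mathrm{id}\); Remark~\ref{rem:canfollow} lets you append \(\mot{v}\); factors of \(\jungle\)-words are \(\jungle\)-words; and the cyclic-shift computation \(\rho_{\mot{sur}}=\mathrm{id}\Rightarrow\rho_{\mot{urs}}=\mathrm{id}\) is valid because these maps are invertible.

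The route, however, is not the paper's. The paper starts directly from \(\mot{u}\): it invokes Theorem~\ref{thm:ubiquity} to find \(\mot{w}\) with \(\mot{uws}\) a \(\jungle\)-word, observes that \(\mot{uws}\) is then a cyclic \(\jungle\)-word (because \(\mot{su}\) is a \(\jungle\)-word and \(\mot{s},\mot{u}\) are stems), takes \(\alpha\) to be its finite order, and exhibits \((\mot{uws})^{\alpha}\mot{v}\) as the witness for \(\mot{u}\sim\mot{v}\). In contrast, you start from \(\mot{s}\), package the finite-order step inside the already-proved reflexivity of \(\sim\), and then use the algebraic cyclic-shift identity to convert an \(\mot{s}\)-prefixed witness into a \(\mot{u}\)-prefixed one. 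Your version is slightly more economical in that it reuses the reflexivity machinery instead of rerunning the finite-order argument, at the cost of the extra conjugation step; the paper's version is more direct in producing a word that begins with \(\mot{u}\) from the outset.
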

\begin{proof}

By transitivity it is enough to prove that: $\mot{u} \wedge_0
\mot{v} \Rightarrow \mot{u}\sim \mot{v}$. Let \(\mot{u}\) and
\(\mot{v}\) be two stems such that \(\mot{u} \wedge_0 \mot{v}\):
there exists a stem \(\mot{s}\) such that \(\mot{su}\) and
\(\mot{sv}\) are \(\jungle\)-words. From Theorem~\ref{thm:ubiquity},
there exists a word \(\mot{w}\in Q^*\) such that \(\mot{uws}\) is a
\(\jungle\)-word. As \(\mot{u}\) and \(\mot{s}\) are stems, and
\(\mot{su}\) is a \(\jungle\)-word, \((\mot{uws})^2\) is 
a \(\jungle\)-word, and so are all the powers of \(\mot{uws}\). Now, by
hypothesis and Theorem~\ref{thm-char}, the word \(\mot{uws}\) has finite order, say \(\alpha\):
\((\mot{uws})^{\alpha}\mot{v}\) is a \(\jungle\)-word and
\(\rho_{(\mot{uws})^{\alpha}}\) acts like the identity.

\end{proof}

\begin{corollary}\label{cor:suf}
For any \(i\), \(\Sufeq(i)\geq 2\).
\end{corollary}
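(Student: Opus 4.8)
The plan is to reduce the statement to a single branching observation and then invoke that $\wedge$ is finer than $\sim$. Recall from Proposition~\ref{prop:numsufeq} that the quantity $\Sufeq(i)$ depends only on~$i$, not on the chosen $\sim$-class nor on the chosen length-$(i-1)$ prefix; hence to prove $\Sufeq(i)\geq 2$ it suffices, for each $i\in\{1,\dots,n\}$, to exhibit \emph{one} class $\equivclass$ and \emph{one} prefix $\mot{c}$ of length $i-1$ which extends in at least two ways inside $\equivclass$. The mechanism producing two equivalent stems will be $\wedge_0$: if two stems $\mot{b}_1,\mot{b}_2$ are both preceded by a common stem $\mot{s}$ (\ie $\mot{s}\mot{b}_1$ and $\mot{s}\mot{b}_2$ are $\jungle$-words), then $\mot{b}_1\wedge_0\mot{b}_2$, so $\mot{b}_1\sim\mot{b}_2$ by Lemma~\ref{lem:sibleq}.

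First I would record the branching fact: every stem has at least two distinct continuation letters in $\jungle$. Indeed, below $\bot(\mot{e})$ the jungle tree is a regular tree whose arity is the number of legitimate children of $\bot(\mot{e})$, which is at least two, and each such child carries label~$1$. By the reversibility analysis of Section~\ref{sec-cc}, an edge of label~$1$ means that, for a fixed state, exactly one letter leads into the corresponding child component; distinct children force distinct letters. Hence every stem $\mot{s}'$ admits exactly $k_n\geq 2$ distinct letters $q$ with $\mot{s}'q$ a $\jungle$-word. Now fix $i$ and a stem $\mot{s}$. Extend $\mot{s}$ inside $\jungle$ by an arbitrary $\jungle$-word continuation $\mot{c}$ of length $i-1$, so that $\mot{s}\mot{c}$ is a $\jungle$-word of length $n+i-1\geq n$. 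Its terminal length-$n$ factor is a stem, which by the branching fact and Remark~\ref{rem:canfollow} can be followed by two distinct letters $q_1\neq q_2$; thus $\mot{s}\mot{c}q_1$ and $\mot{s}\mot{c}q_2$ are $\jungle$-words. Extending each to length at least $2n$ and reading off the length-$n$ factor $\mot{b}_j$ that begins just after $\mot{s}$, I obtain two stems $\mot{b}_1,\mot{b}_2$ (any length-$n$ factor of a $\jungle$-word being a stem) that share the prefix $\mot{c}$ of length $i-1$ and differ at position $i$, since their $i$-th letters are $q_1$ and $q_2$.

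It remains to package this. Each $\mot{s}\mot{b}_j$ is a prefix of length $2n$ of the $\jungle$-word $\mot{s}\mot{c}q_j\cdots$, hence itself a $\jungle$-word, so $\mot{b}_1\wedge_0\mot{b}_2$ and therefore $\mot{b}_1\sim\mot{b}_2$. Taking $\equivclass=[\mot{b}_1]_\sim$, the prefix $\mot{c}$ of length $i-1$ is completed into at least two stems of $\equivclass$ differing at position $i$, so $\Sufeq(i)\geq 2$ by the well-definedness recalled above. The main obstacle, and the point needing care, is the branching fact itself: one must argue that \emph{every} stem (not merely some stem) has at least two continuation letters, which rests on the label-$1$ structure of the legitimate children together with the arity being at least two; the rest — keeping all the relevant factors genuine stems and genuine $\jungle$-words, and positioning the branch precisely at level $i$ through the auxiliary prefix $\mot{c}$ — is routine bookkeeping.
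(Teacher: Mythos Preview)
Your proof is correct and follows essentially the same approach as the paper. The paper's one-sentence argument---that for a fixed stem the set of stems in $\wedge_0$-relation with it, viewed as a tree, has the same arity as $\jungle$, whence by Lemma~\ref{lem:sibleq} $\Sufeq(i)$ is at least this arity---is precisely what you unpack: you fix a stem $\mot{s}$, observe that the continuations of $\mot{s}$ inside $\jungle$ branch at every level with the arity $k_n\geq 2$, note that any two length-$n$ continuations $\mot{b}_1,\mot{b}_2$ satisfy $\mot{b}_1\wedge_0\mot{b}_2$ via $\mot{s}$, and conclude.
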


\begin{proof}
For a fixed stem \(\mot{u}\), the language of words in
  \(\wedge_0\)-relation with \(\mot{u}\), seen as a tree, has the
same arity than \(\jungle\), and so, by Lemma~\ref{lem:sibleq}, for any $i$,
\(\Sufeq(i)\) is greater than or equal to the arity of~\(\jungle\).
\end{proof}

\begin{proposition}\label{prop:fixprefix}

Fix a \(\jungle\)-word \(\mot{u}\) of length less than~\(n\).
The number of stems prefixed by \(\mot{u}\) in a \(\sim\)-class is
either~\(0\) or depends only on \(|\mot{u}|\).
\end{proposition}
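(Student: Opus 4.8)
The plan is to reduce the statement directly to Proposition~\ref{prop:numsufeq}. Since every stem has length exactly~$n$, a stem prefixed by~$\mot{u}$ is precisely a word~$\mot{uv}$ with~$\mot{v}\in Q^{\,n-|\mot{u}|}$; counting the stems of a given $\sim$-class~$\equivclass$ prefixed by~$\mot{u}$ therefore amounts to counting the words~$\mot{v}\in Q^{k}$, with~$k=n-|\mot{u}|$, for which~$\mot{uv}$ is a stem of~$\equivclass$.

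First I would dispose of the degenerate case: if~$\equivclass$ contains no stem prefixed by~$\mot{u}$, the count is~$0$, which is the first alternative of the statement. Otherwise~$\equivclass$ contains a stem with prefix~$\mot{u}$, so the hypotheses of Proposition~\ref{prop:numsufeq} hold with the choice~$k=n-|\mot{u}|$ (for which~$|\mot{u}|+k=n\leq n$). For this value of~$k$ a word~$\mot{uv}$ with~$\mot{v}\in Q^{k}$ has length~$n$, so ``$\mot{uv}$ is a prefix of a stem of~$\equivclass$'' is the same as ``$\mot{uv}$ is a stem of~$\equivclass$''. Proposition~\ref{prop:numsufeq} then yields that the number of such~$\mot{v}$ depends only on~$|\mot{u}|$ and~$k$; as~$n$ is fixed for the jungle tree~$\jungle$, the quantity~$k=n-|\mot{u}|$ is determined by~$|\mot{u}|$, so the count depends only on~$|\mot{u}|$. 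This is the second alternative, and it is in particular independent of the class~$\equivclass$ and of the particular word~$\mot{u}$ of the given length.

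The point requiring care --- and the reason the statement is not vacuous despite this short derivation --- is the uniformity already packaged into Proposition~\ref{prop:numsufeq}, itself proved ``by the same argument as'' Proposition~\ref{prop:samesize}: one transports stems between connected components via a word~$\mot{a}\in\Sigma^*$ realizing~$\delta_{\mot{a}}(\mot{u}_0)=\mot{v}_0$ and uses reversibility to obtain a bijection respecting~$\sim$. If Proposition~\ref{prop:numsufeq} were unavailable, I would argue directly: given two $\jungle$-words~$\mot{u},\mot{u}'$ of the same length~$m<n$, both prefixes of stems of classes~$\equivclass,\equivclass'$, use that they are states of the unique connected component at level~$m$ of the trunk to pick~$\mot{a}$ with~$\delta_{\mot{a}}(\mot{u})=\mot{u}'$, and then verify that the induced action maps the stems of~$\equivclass$ prefixed by~$\mot{u}$ bijectively onto the stems prefixed by~$\mot{u}'$ in a single image class, preserving the relation~$\sim$ exactly as in the reversibility argument of Proposition~\ref{prop:samesize}. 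That bijection is the genuine content; quoting Proposition~\ref{prop:numsufeq} merely packages it.
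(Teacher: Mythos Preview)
Your reduction is correct: taking \(k=n-|\mot{u}|\) in Proposition~\ref{prop:numsufeq} gives exactly the count of stems of~\(\equivclass\) prefixed by~\(\mot{u}\), and this is precisely what the paper does in spirit---its proof is the single line ``by the same argument as in the proof of Proposition~\ref{prop:samesize}'', i.e.\ the transport-by-\(\delta_{\mot{a}}\) bijection you spell out in your final paragraph. So your proposal and the paper's proof are the same argument, with your version adding one level of indirection through the already-stated Proposition~\ref{prop:numsufeq}.
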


\begin{proof}

By the same argument as in the proof of Proposition~\ref{prop:samesize}.
\end{proof}

From Propositions~\ref{prop:numsufeq} and~\ref{prop:fixprefix} we obtain:
\begin{corollary}\label{cor:fixprefix}
Fix a \(\jungle\)-word \(\mot{u}\) of length less than~\(n\).
The number of \(\sim\)-classes where \(\mot{u}\) is the prefix of
some stem depends only on \(|\mot{u}|\).
\end{corollary}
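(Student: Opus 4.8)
The plan is to combine Proposition~\ref{prop:numsufeq} and Proposition~\ref{prop:fixprefix} by a straightforward counting argument, so the corollary is essentially a bookkeeping consequence of the two preceding results.

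First I would fix a \(\jungle\)-word \(\mot{u}\) of length \(\ell<n\). By Lemma~\ref{lem:sizestem}, the total number of stems prefixed by \(\mot{u}\) is \(k_{\ell+1}k_{\ell+2}\cdots k_n\), a quantity that depends only on \(\ell=|\mot{u}|\). These stems are partitioned among the various \(\sim\)-classes in which \(\mot{u}\) is the prefix of a stem. By Proposition~\ref{prop:fixprefix}, within any single such \(\sim\)-class the number of stems prefixed by \(\mot{u}\) is a constant \(N(\ell)\) depending only on \(|\mot{u}|\) (the class contributes \(0\) otherwise, but by hypothesis we restrict to classes that genuinely contain a stem with prefix \(\mot{u}\)). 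Hence, letting \(m(\ell)\) be the number of \(\sim\)-classes where \(\mot{u}\) is the prefix of some stem, we get the identity
\[
k_{\ell+1}k_{\ell+2}\cdots k_n = m(\ell)\cdot N(\ell),
\]
since the stems prefixed by \(\mot{u}\) are exactly those lying in one of these \(m(\ell)\) classes, each contributing \(N(\ell)\) of them.

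The main point, and the only thing to check with care, is that the number \(N(\ell)\) appearing in Proposition~\ref{prop:fixprefix} is the same constant across \emph{all} the classes that contain a stem prefixed by \(\mot{u}\); this is exactly what Proposition~\ref{prop:fixprefix} asserts (the count is either \(0\) or a value depending only on \(|\mot{u}|\), hence the nonzero values coincide). Given this, solving for \(m(\ell)\) yields
\[
m(\ell) = \frac{k_{\ell+1}k_{\ell+2}\cdots k_n}{N(\ell)},
\]
which depends only on \(|\mot{u}|\) and not on the particular word \(\mot{u}\), as required. One should also note that \(N(\ell)\geq 1\) whenever any class contains a stem prefixed by \(\mot{u}\), so the division is legitimate.

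I expect no serious obstacle here: the substantive work is already done in Propositions~\ref{prop:numsufeq} and~\ref{prop:fixprefix}, both of which invoke the transport-of-structure argument from the proof of Proposition~\ref{prop:samesize} (moving between connected components by a reversibility/invertibility action while preserving \(\sim\)-equivalence and the jungle-word property). The corollary merely repackages these into a single count by dividing the total stem count from Lemma~\ref{lem:sizestem} by the per-class count. The one place to be attentive is ensuring that ``depends only on \(|\mot{u}|\)'' is used consistently for both factors so that their quotient inherits the same uniformity.
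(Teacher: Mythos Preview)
Your counting argument is correct and matches the paper's approach, which simply records that the corollary follows from Propositions~\ref{prop:numsufeq} and~\ref{prop:fixprefix} without spelling out the division you perform. One minor remark: although you invoke Proposition~\ref{prop:numsufeq} in your opening sentence, your actual argument rests only on Lemma~\ref{lem:sizestem} (for the total count) and Proposition~\ref{prop:fixprefix} (for the per-class count), which is already sufficient.
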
 

Denote by $\Preeq(|\mot{u}|+1)$ the number of \(\sim\)-classes
containing a stem prefixed by \(\mot{u}\) (it is correctly define by
Corollary~\ref{cor:fixprefix}). 
\medskip

We can now prove the main result of this section:
\begin{theorem}\label{thm:Qprefix}
Let $\auta$ be a connected bireversible Mealy automaton of prime size
and without any active self-liftable branch. The set of
states which appear as first letter of a stem in a fixed \(\sim\)-class is the whole stateset.
\end{theorem}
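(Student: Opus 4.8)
The plan is to reduce the statement to a single numerical quantity and prove that it equals $p$. Fix a $\sim$-class $\equivclass$ and let $F(\equivclass)\subseteq Q$ be the set of first letters of the stems of $\equivclass$; by Proposition~\ref{prop:numsufeq} its cardinality is exactly $\Sufeq(1)$, independently of $\equivclass$, so the theorem is equivalent to $\Sufeq(1)=p$. Two bounds are already at hand: $\Sufeq(1)\leq k_1=p$, since every stem of $\equivclass$ is a state of the top connected component whose length-one prefixes form the whole level-one component $Q$, and $\Sufeq(1)\geq 2$ by Corollary~\ref{cor:suf}. It therefore suffices to show that $\Sufeq(1)$ divides $p$, after which primality together with $\Sufeq(1)\geq 2$ forces $\Sufeq(1)=p$.

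To produce such a divisibility I would exploit the action of $G:=\pres{\dual{\auta}}$. Since $\auta$ is connected, $G$ acts transitively on $Q$, and as $\#Q=p$ is prime this action is \emph{primitive}: the only $G$-invariant partitions of $Q$ are the singletons and $Q$ itself. The stems of $\jungle$ are exactly the states of a single connected component of $\auta^{n}$, on which $G$ also acts transitively, and the first-letter map is $G$-equivariant onto $Q$: for $\mot{a}\in\Sigma^{*}$, the first letter of $\delta_{\mot{a}}(\mot{u})$ is $\delta_{\mot{a}}(q)$, where $q$ is the first letter of $\mot{u}$ and $\delta_{\mot{a}}$ acts on the right via the level-one component. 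Moreover, rerunning the computation of Proposition~\ref{prop:samesize} shows that each $\delta_{\mot{a}}$ sends $\sim$-classes bijectively onto $\sim$-classes; hence the family $\{F(\equivclass)\}_{\equivclass}$ is permuted by $G$, with $\delta_{\mot{a}}\cdot F(\equivclass)=F\bigl(\delta_{\mot{a}}(\equivclass)\bigr)$.

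The key step, and the main obstacle, is to show that this family is a \emph{block system} for the primitive $G$-set $Q$, that is, that for any two classes $\equivclass,\equivclass'$ the sets $F(\equivclass)$ and $F(\equivclass')$ are either equal or disjoint. Granting this, each $F(\equivclass)$ is a block of the primitive action, so its size is $1$ or $p$, and $\Sufeq(1)\geq 2$ yields $\Sufeq(1)=p$. To establish the block property I would prove that whenever a state $q$ lies in $F(\equivclass)\cap F(\equivclass')$ the two classes already satisfy $F(\equivclass)=F(\equivclass')$, by transporting inside a class any first letter to any other while producing the identity. This is the mechanism of the symmetry argument for $\sim$ and of Figure~\ref{fig:extW}: starting from a stem of $\equivclass$ beginning with $q$, Theorem~\ref{thm:ubiquity} lets one recur that stem along a $\jungle$-word; the finiteness of order furnished by Theorem~\ref{thm-char}, combined with the invertibility of $\auta$, lets one pass to a power of the recurring block that stabilizes the horizontal $\Sigma$-word and at the same time produces the identity output; and the finer relation $\wedge$ of Lemma~\ref{lem:sibleq} controls which first letters become reachable after a common stem.

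The delicate point is precisely this conversion between two a priori unrelated structures: reachability of first letters is governed by the transition functions $\delta$ and uses only connectivity, which is available for every bireversible automaton, whereas membership in a single $\sim$-class is governed by the output functions $\rho$ being trivial. Reconciling them---forcing one power of a recurring $\jungle$-word to fix the $\Sigma$-side and annihilate the $Q$-side simultaneously---is where the finite-order hypothesis is spent, and the passage from ``$F(\equivclass)$ is a union of blocks'' to ``$F(\equivclass)$ is a single block of size $1$ or $p$'' is where primality is genuinely used. I expect the technical heart to be the verification that the transported stems remain in the prescribed class $\equivclass$, rather than drifting to the image class $\delta_{\mot{a}}(\equivclass)$.
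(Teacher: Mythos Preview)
Your reduction to $\Sufeq(1)=p$ matches the paper, as does the bound $\Sufeq(1)\geq 2$ from Corollary~\ref{cor:suf}. The divergence is in how one obtains $\Sufeq(1)\mid p$. The paper does not use primitivity or block systems at all: it introduces a companion quantity $\Preeq(\ell)$ (the number of $\sim$-classes containing a stem with a prescribed length-$(\ell-1)$ prefix, independent of that prefix by Corollary~\ref{cor:fixprefix}) and, by counting in two ways the stems with a given prefix, arrives at the factorisation $k_\ell=\Sufeq(\ell)\cdot\Preeq(\ell)$ for every level $\ell$ of the trunk. At $\ell=1$ this reads $p=k_1=\Sufeq(1)\cdot\Preeq(1)$ with $\Preeq(1)$ an integer, which is the required divisibility.

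Your primitive-action route has a genuine gap at exactly the point you identify. Take $q\in F(\equivclass)\cap F(\equivclass')$ and $q'\in F(\equivclass)$; you need $q'\in F(\equivclass')$. Your transport produces $\mot{a}\in\Sigma^{*}$ with $\delta_{\mot{a}}(\mot{u})=\mot{u}'$ for stems $\mot{u}\in\equivclass$, $\mot{u}'\in\equivclass'$ both beginning with $q$; this $\delta_{\mot{a}}$ indeed fixes $q$ at level~$1$ and sends $\equivclass$ to $\equivclass'$. But applying $\delta_{\mot{a}}$ to a stem $\mot{u}''\in\equivclass$ with first letter $q'$ yields a stem of $\equivclass'$ whose first letter is $\delta_{\mot{a}}(q')$, not $q'$. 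So you only recover $\delta_{\mot{a}}\bigl(F(\equivclass)\bigr)=F(\equivclass')$, which is the equivariance you already had, and nothing forces an element of the stabiliser of $q$ in $\pres{\dual{\auta}}$ to preserve $F(\equivclass)$ setwise unless $F(\equivclass)$ is already known to be a block. The ubiquity/finite-order mechanism lets you revisit a given \emph{stem} after producing the identity output, but it gives no handle on first letters across \emph{different} $\sim$-classes; and $\wedge$ merely refines $\sim$ inside a single class, so Lemma~\ref{lem:sibleq} does not help here either. The paper's purely numerical argument via $\Preeq$ bypasses this obstacle entirely.
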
 

\begin{proof}
Suppose \(\aut{A}=(Q,\Sigma,\delta,\rho)\) has prime size \(p\), and let
\(\jungle\) be a jungle tree of \(\ota\) whose 
trunk~\(\mot{e}\) has length~\(n\). We denote by \(k_1\),\dots ,\(k_n\)
the label of the edges of \(\mot{e}\) (from top to bottom). By the
connectivity of \(\aut{A}\), \(k_1=p\).

Let \(\equivclass\) be a \(\sim\)-class of stems for \(\jungle\)
and \(\mot{u}\in Q^*\) of length \(i\leq n\) be the prefix of some
stem in~\(\equivclass\). Consider all the stems in~\(\equivclass\)
with prefix~\(\mot{u}\).

From Lemma~\ref{lem:sizestem}, the number of stems of \(\jungle\)  prefixed by \(\mot{u}\) is
$k_{i+1}\times k_{i+2}\times \hdots\times k_n$. On the other hand, it is
also the number of stems with prefix~\(\mot{u}\) in \(\equivclass\),
\ie \(\Sufeq(i+1)\times\dots \times\Sufeq(n)\), 
times the number of \(\sim\)-classes which has a stem prefixed
by~\(\mot{u}\), \ie 
 \(\Preeq(i+1)\times\dots \times\Preeq(n)\). Hence
\[k_{i+1}\times k_{i+2}\times \hdots \times k_n = \Sufeq(i+1)\times \Preeq(i+1) \times\Sufeq(i+2)\times \Preeq(i+2) \times\hdots \times  \Sufeq(n)\times \Preeq(n) \:.\]

It is straightforward that $k_n = \Preeq(n)\times \Sufeq(n) $ and  by
induction \(\Preeq(\ell) \times \Sufeq(\ell)= k_{\ell}\) for all \(\ell\). In
particular for $\ell=1$, we get that \(\Sufeq(1)\) devides \(k_1\). Since
$k_1=p$ and, from Corollary~\ref{cor:suf}, $\Sufeq(1)\geq 2$, we
obtain then \(\Sufeq(1)=p\).
\end{proof}

\begin{corollary}\label{cor-id-follow}
Let \(\aut{A}=(Q,\Sigma,\delta,\rho)\) be a connected bireversible Mealy automaton of prime size, with no active self-liftable branch. Let 
\(\jungle\) be a jungle tree of \(\otree[A]\) and  \(\mot{u}\)  some
\(\jungle\)-word. Then for any state \(x\in Q\), there exists
\(\mot{w}\in Q^*\) such that \(\mot{uw}x\) is a \(\jungle\)-word and
\(\rho_{\mot{w}}\) acts like the identity of~\(\Sigma^*\).
\end{corollary}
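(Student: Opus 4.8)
The plan is to reduce to the case where $\mot{u}$ is a stem and then to steer, from the end of $\mot{u}$, onto a stem beginning with $x$ while reading a block that acts trivially. First I would replace $\mot{u}$ by its suffix stem $\mot{p}$ (its last $n$ letters): by Remark~\ref{rem:canfollow} everything that can follow $\mot{u}$ in $\jungle$ depends only on $\mot{p}$, so it suffices to find $\mot{w}$ with $\rho_{\mot{w}}$ acting like the identity and $\mot{p}\mot{w}\mot{v}$ a $\jungle$-word for some stem $\mot{v}$ starting with $x$; then $\mot{u}\mot{w}\mot{v}$ is a $\jungle$-word and $\mot{u}\mot{w}x$ is one of its prefixes. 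Such a target stem is furnished by Theorem~\ref{thm:Qprefix}: the $\sim$-class of $\mot{p}$ contains a stem $\mot{v}$ whose first letter is $x$, and in particular $\mot{p}\sim\mot{v}$.

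The naive attempt goes through the equivalence $\mot{p}\sim\mot{v}$ itself. It produces $\mot{s}$ with $\mot{p}\mot{s}\mot{v}$ a $\jungle$-word and $\rho_{\mot{p}\mot{s}}$ the identity; by Remark~\ref{rem:canfollow} this yields the $\jungle$-word $\mot{u}\mot{s}\mot{v}$. However, $\sim$ only records that $\rho_{\mot{p}\mot{s}}=\mathrm{id}$, that is $\rho_{\mot{s}}=\rho_{\mot{p}}^{-1}$, whereas the statement asks the \emph{inserted block alone} to act like the identity. Removing this residual factor $\rho_{\mot{p}}$ is exactly \textbf{the main obstacle}, and it cannot be done with $\sim$-bridges alone: any loop assembled from them telescopes to the identity and hence never compensates $\rho_{\mot{p}}$.

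To overcome it I would first establish the key lemma that there is a nonempty loop at $\mot{p}$ inducing the identity, i.e. a $\jungle$-word $\mot{p}\mot{g}\mot{p}$ with $\rho_{\mot{g}}=\mathrm{id}$ (equivalently, that $\rho_{\mot{p}}$ is itself realized as the action of a return loop at $\mot{p}$). Granting this, set $\mot{w}=\mot{g}\mot{p}\mot{s}$; splicing the loop $\mot{p}\mot{g}\mot{p}$ with the bridge $\mot{p}\mot{s}\mot{v}$ along the common stem $\mot{p}$ keeps $\mot{u}\mot{w}\mot{v}$ a $\jungle$-word (Remark~\ref{rem:canfollow}), while $\rho_{\mot{w}}=\rho_{\mot{s}}\,\rho_{\mot{p}}\,\rho_{\mot{g}}=\rho_{\mot{p}}^{-1}\,\rho_{\mot{p}}=\mathrm{id}$, as required. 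To build the trivial loop I would try to realize $\rho_{\mot{p}}$ itself as the action of a return loop at $\mot{p}$: return loops exist by the ubiquity Theorem~\ref{thm:ubiquity}, and their actions form a finite group (finiteness of orders coming from Theorem~\ref{thm-char}), so the task is to show this group contains $\rho_{\mot{p}}$. Here I would exploit that reading the letters of $\mot{p}$ one after another cyclically brings the suffix stem back to $\mot{p}$, so that turning this reading into a legitimate $\jungle$-path — which is what the $1$-self-liftability of $\jungle$ should provide, possibly after passing to a power dictated by the order of $\rho_{\mot{p}}$ — would exhibit $\rho_{\mot{p}}$ as a loop action. Verifying that $\rho_{\mot{p}}$ lies in the loop-action group, and not merely in a nontrivial coset of it, is the delicate step on which the whole argument turns.
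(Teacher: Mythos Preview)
Your proposal contains a genuine gap, and the obstacle you struggle with is self-inflicted. By anchoring the argument on the \emph{suffix} stem $\mot{p}$ of $\mot{u}$, the $\sim$-bridge you obtain has the form $\mot{p}\mot{s}\mot{v}$ with $\rho_{\mot{p}\mot{s}}=\mathrm{id}$, so the inserted block $\mot{s}$ acts like $\rho_{\mot{p}}^{-1}$ rather than the identity. Your repair is to posit a ``trivial loop'' $\mot{p}\mot{g}\mot{p}$ with $\rho_{\mot{g}}=\mathrm{id}$, but you do not prove this lemma. The sketch you offer---realizing $\rho_{\mot{p}}$ as a return-loop action by ``reading the letters of $\mot{p}$ cyclically'' and invoking $1$-self-liftability---is not an argument: self-liftability guarantees that factors of $\jungle$-words are $\jungle$-words, not that a cyclic rotation of $\mot{p}$ can be appended to $\mot{p}$ inside $\jungle$. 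Concretely, the return actions $\{\rho_{\mot{g}\mot{p}}:\mot{p}\mot{g}\mot{p}\in\jungle\}$ do form a finite group $H_{\mot p}$, but what you need is $\rho_{\mot{p}}\in H_{\mot p}$, and nothing you have written establishes this; every loop you can manufacture from reflexivity or from iterating a single bridge collapses back to $\rho_{\mot{g}}=\rho_{\mot{p}}^{-1}$. (There is also a minor gap: your reduction assumes $|\mot{u}|\ge n$, whereas the statement---and its use in Theorem~\ref{thm:main}---requires $\mot{u}$ arbitrary, including empty.)

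The paper avoids the whole difficulty by a one-line change of viewpoint: instead of the suffix stem of $\mot{u}$, take a stem $\mot{s}$ that \emph{follows} $\mot{u}$, i.e.\ extend $\mot{u}$ inside $\jungle$ so that $\mot{u}\mot{s}$ is a $\jungle$-word. Now apply Theorem~\ref{thm:Qprefix} to the $\sim$-class of $\mot{s}$ to obtain a stem with first letter $x$, giving $\mot{v}$ with $\mot{s}\mot{v}x$ a $\jungle$-word and $\rho_{\mot{s}\mot{v}}=\mathrm{id}$. Setting $\mot{w}=\mot{s}\mot{v}$, Remark~\ref{rem:canfollow} makes $\mot{u}\mot{w}x$ a $\jungle$-word, and $\rho_{\mot{w}}=\mathrm{id}$ holds directly---no compensation step is needed. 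The point is that in the definition of $\sim$ the \emph{leading} stem is absorbed into the identity block, so one should arrange for that leading stem to sit inside $\mot{w}$ rather than inside $\mot{u}$.
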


\begin{proof}
Let \(\mot{s}\) be a stem such that \(\mot{us}\) is a
\(\jungle\)-word: there exists a stem \(\mot{x}\) with first
letter~\(x\) in the \(\sim\)-class of \(\mot{s}\), from
Theorem~\ref{thm:Qprefix}, \ie there exists \(\mot{v}\in Q^*\) such
that \(\mot{svx}\) is a \(\jungle\)-word and \(\rho_{\mot{sv}}\) acts like
the identity of~\(\Sigma^*\). Conclusion comes from
Remark~\ref{rem:canfollow}.
\end{proof}

Note that in the previous corollary, the word \(\mot{u}\) can be empty.

\section{Proof of the main theorem}\label{sec-thm}\label{sec:main}

We now have all elements to prove our main result.

\begin{theorem}\label{thm:main}
 A connected  invertible-reversible Mealy automaton of  prime size cannot generate an infinite Burnside group.
\end{theorem}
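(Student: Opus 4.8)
The plan is to reduce to the situation set up in Section~\ref{sec-jungle} by a short case analysis, after which all the machinery of that section feeds into a single remaining case. Since the statement concerns invertible-reversible automata, I first suppose $\auta$ is \emph{not} bireversible; then~\cite{GKP} already guarantees that $\auta$ cannot generate an infinite Burnside group, and we are done. So assume $\auta$ is bireversible. If the orbit tree $\ota$ admits an active self-liftable branch, then by Theorem~\ref{thm-char} the semigroup $\presm{\auta}$ contains an element of infinite order; this element also lies in $\gauta$, so $\gauta$ is not a Burnside group at all, and in particular is not an infinite Burnside group. Hence the only case requiring the new tools is that of a bireversible $\auta$ whose orbit tree has no active self-liftable branch --- exactly the standing hypothesis of Section~\ref{sec-jungle}.

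In this case Theorem~\ref{thm-char} ensures that every element of $\presm{\auta}$ has finite order. In particular each generator $\rho_x$ is invertible inside the semigroup (its inverse is a suitable power of itself), so $\gauta=\presm{\auta}$ and every group element equals $\rho_{\mot{u}}$ for some $\mot{u}\in Q^*$. I would then fix a jungle tree $\jungle$ of $\ota$ with trunk of length~$n$ (one exists by hypothesis) and show that each such $\rho_{\mot{u}}$ is induced by a \(\jungle\)-word. Writing $\mot{u}=x_1\dots x_m$, I build this \(\jungle\)-word letter by letter: starting from the empty word, and having already produced a \(\jungle\)-word inducing $\rho_{x_1\dots x_i}$, Corollary~\ref{cor-id-follow} (which remains valid for an empty \(\jungle\)-word argument) supplies a word $\mot{w}_i$ that acts like the identity and such that appending $\mot{w}_i x_{i+1}$ keeps a \(\jungle\)-word. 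Because $\rho_{\mot{a}\mot{b}}=\rho_{\mot{b}}\circ\rho_{\mot{a}}$ and $\rho_{\mot{w}_i}$ is trivial, these insertions do not change the induced action, so the resulting \(\jungle\)-word $\mot{u}'$ satisfies $\rho_{\mot{u}'}=\rho_{\mot{u}}$.

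The conclusion then comes from the fact that \(\jungle\)-words live in bounded connected components. By construction of a jungle tree every edge below the trunk has label~$1$, so the connected component of a \(\jungle\)-word in the relevant power of $\auta$ has size at most $N:=k_1\dots k_n$, the product of the trunk labels (and exactly $N$ once the word has length at least~$n$). Thus $\rho_{\mot{u}'}$ is the map induced by a state of a connected reversible Mealy automaton over the fixed finite alphabet $\Sigma$ with at most $N$ states. There are only finitely many such automata up to isomorphism, hence only finitely many (automaton, state) pairs, hence only finitely many maps $\Sigma^*\to\Sigma^*$ arising this way, since isomorphic components with corresponding states induce the same map. As every element of $\gauta$ is of the form $\rho_{\mot{u}}=\rho_{\mot{u}'}$, the group $\gauta$ is finite, and a finite group is certainly not an infinite Burnside group. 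The crux of the whole argument is the realization step of the second paragraph: everything there rests on Corollary~\ref{cor-id-follow}, and it is precisely in its proof --- through Theorem~\ref{thm:Qprefix} --- that the primality of $\#Q$ is essential, the surrounding reasoning being otherwise insensitive to the exact size of the stateset.
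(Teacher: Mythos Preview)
Your case analysis and the inductive construction of a \(\jungle\)-word \(\mot{u}'\) with \(\rho_{\mot{u}'}=\rho_{\mot{u}}\) via Corollary~\ref{cor-id-follow} are exactly what the paper does. Where you diverge is in the endgame, and your route is actually cleaner. The paper does not stop once every element is represented by a \(\jungle\)-word: it iterates the construction to get a concatenable sequence \(\mot{u}^{(1)}\mot{u}^{(2)}\cdots\), uses a pigeonhole on length-\(n\) prefixes to extract a \emph{cyclic} \(\jungle\)-word inducing some power \(\mot{u}^{j-i}\), invokes Proposition~\ref{prop-const-order} to bound its order uniformly, and then appeals to Zelmanov's solution of the restricted Burnside problem (together with residual finiteness of automaton groups) to pass from ``bounded exponent'' to ``finite''. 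You bypass all of this: since every \(\jungle\)-word lies in a connected component of some \(\auta^m\) of size at most \(N=k_1\cdots k_n\), and since the induced map \(\rho_{\mot{u}'}\) depends only on the isomorphism type of that component together with the chosen state, finiteness of \(\gauta\) follows from the trivial observation that there are only finitely many Mealy automata on at most \(N\) states over the fixed alphabet \(\Sigma\). This is correct (the connected component is closed under all \(\delta_i\), so the restriction really computes the same \(\rho\)), more elementary, and avoids the heavy external input of Zelmanov's theorem. What the paper's detour buys is an explicit uniform bound on the exponent of \(\gauta\) phrased in terms of cyclic \(\jungle\)-words, but for the bare statement of Theorem~\ref{thm:main} your counting argument is sufficient and arguably preferable.
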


\begin{proof}
Let $\auta$ be a connected invertible-reversible Mealy automaton of
prime size.
If $\auta$ is not bireversible we can apply \cite{AKLMP12,GKP}
and we get that, on one hand, $\gauta$ is necessarily infinite, but on
the other hand, it cannot be Burnside.
If $\auta $ is bireversible and $\ota$ has an active self-liftable branch,
then $\gauta$ has an element of infinite order by
Theorem~\ref{thm-char}.

Therefore we can assume that  $\auta $ is bireversible and $\ota$ has
no active self-liftable branch. Let us show that $\gauta$ is finite. Let
\(\jungle\) be some jungle tree of \(\ota\). As in
\cite{klimann_ps:3state} we prove that for any word \(\mot{u}\in
Q^*\), $\rho_{\mot{u}}$ has some uniformly bounded power which acts like some cyclic $\jungle$-word.

Let $\mot{u} \in Q^*$. We prove by induction that any prefix $\mot{u}$
induces the same action as some $\jungle$-word. It is obviously true
for the empty prefix. Fix some $k < |\mot{u}|$ and suppose that the
prefix $\mot{v}$ of length $k$ of $\mot{u}$ induces the same action as
some $\jungle$-word $\mot{s}$. Let $x \in Q$ be the $(k+1)$-th letter
of $\mot{u}$. By Corollary~\ref{cor-id-follow}, there exists a
$\jungle$-word $\mot{w}$ inducing the identity, such that $\mot{sw}x$
is a $\jungle$-word. But $\mot{v}x$ and $\mot{sw}x$ induce the same
action ;  the result follows. Hence we obtain a $\jungle$-word
$\mot{u}^{(1)}$ inducing the same action as~$\mot{u}$.

 By the very same process, we can construct, for any $i \in \N$, a $\jungle$-word ${\mot{u}^{(i)}}$  inducing the same action as $\mot{u}$, such that $\mot{u}^{(1)}\mot{u}^{(2)}\hdots \mot{u}^{(i)} $ is a $ \jungle$-word. Since the set $Q^{n}$ is finite there exist $i<j, \: j-i \leq |Q|^n$, such that $\mot{u}^{(i)}$ and $\mot{u}^{(j)}$ have the same prefix of length $n$. Take $\mot{v}=\mot{u}^{(i)}\mot{u}^{(i+1)}\hdots \mot{u}^{(j-1)}$: $\mot{v}$ is a cyclic $\jungle$-word  and induces the  same action as $\mot{u}^{j-i}$. By Proposition~\ref{prop-const-order}, the order of $\rho_{\mot{v}}$ is bounded by a constant depending only on $\jungle$, hence so does $\rho_{\mot{u}}$ (with a different constant, but still depending only on $\jungle$).
Consequently, every element of $\gauta$ has a finite order, uniformly bounded by a constant, whence, as $\gauta$ is residually finite, by Zelmanov's theorem \cite{Ze1,Ze2}, $\gauta$ is finite, which concludes the proof.

\end{proof}

The tools and techniques we have developed here enabled to
  bridge the gap between~\(3\) and the set of all prime numbers. The
  next step is the extension of our result to any connected
  automaton. However, experiments suggest that there are strong similarities
  between the non prime case and the non connected case,  bringing the
hope to solve entirely the question of the (im)possible generation of
an infinite Burnside group by a reversible Mealy automaton. Note that
the primality of the stateset is not used here before
Theorem~\ref{thm:Qprefix}. It is likely that the extension of
Theorem~\ref{thm:main} to more general statesets will require to
choose carefully some $k$-self-liftable branches, with $k>1$. In fact,
there exist examples of automata for which the set of first letters in
a $\sim$-class is not the whole stateset. However the $\sim$-classes
seem to still play a crucial role in these examples. So our
construction will certainly be a key element for a more general result.

\bibliography{burnside}
\end{document}